\definecolor{oeawblue}{cmyk}{0.9,0.68,0,0}
\definecolor{iqoqiblue}{cmyk}{0.76,0.11,0,0}
\def\be{\begin{equation}}
\def\ee{\end{equation}}
\def\bea{\begin{eqnarray}}
\def\eea{\end{eqnarray}}
\def\bma{\begin{mathletters}}
\def\ema{\end{mathletters}}
\def\eref#1{Eq.~\ref{#1}}
\def\0{\overline{0}}
\def\q0{\underline{0}}
\def\H{{\cal H}}
\def\T{{\cal T}}
\def\C{{\mathbb C}}
\def\id{{\mathbb I}}
\def\H{{\cal H}}
\def\tr{\mbox{tr}}
\def\one{\leavevmode\hbox{\small1\normalsize\kern-.33em1}}
\def\braket#1#2{\langle#1|#2\rangle}
\def\proj#1{\ket{#1}\!\bra{#1}}
\newtheorem{theo}{Theorem}
\def\id{{\mathbb I}}
\begin{document}

\title{Entanglement and Nonlocality in Infinite 1D Systems}
\author{Zizhu Wang, Sukhwinder Singh and Miguel Navascu\'es}
\affiliation{Institute for Quantum Optics and Quantum Information (IQOQI) Vienna, Austrian Academy of Sciences, Boltzmanngasse 3, 1090 Vienna, Austria}

\begin{abstract}
We consider the problem of detecting entanglement and nonlocality in one-dimensional (1D) infinite, translation-invariant (TI) systems when just near-neighbor information is available. This issue is deeper than one might think a priori, since, as we show, there exist instances of local separable states (classical boxes) which only admit entangled (non-classical) TI extensions. We provide a simple characterization of the set of local states of multi-separable TI spin chains and construct a family of linear witnesses which can detect entanglement in infinite TI states from the nearest-neighbor reduced density matrix. Similarly, we prove that the set of classical TI boxes forms a polytope and devise a general procedure to generate all Bell inequalities which characterize it. Using an algorithm based on matrix product states, we show how some of them can be violated by distant parties conducting identical measurements on an infinite TI quantum state. All our results can be easily adapted to detect entanglement and nonlocality in large (finite, not TI) 1D condensed matter systems.
\end{abstract}


\maketitle


\section{Introduction}

Imagine a scenario where a number of scientists are sent on a space exploration mission. Confined to separate vessels, they can only probe their immediate surroundings and communicate the outcomes of their experiments. We do not need to specify the exact nature of those experiments, but one could think, for instance, that each scientist is locally interacting with the vacuum state of a global quantum field. 

After conducting such experiments in different places what the scientists find out is that they always obtain the same statistics, no matter where they are, as long as the relative position between their vessels is the same. Unable to explore the whole universe, they postulate that this property must hold elsewhere, in addition to the regions they already visited. To model this assumption physically, we picture these scientists probing different sites of an infinite translation-invariant (TI) system. 

For further elucidation, we consider the simplest such scenario where the scientists live in a world that has one spatial dimension, so experiments are conducted at equidistant points on a straight line. The question we want to address is: from the information gathered by a small neighborhood of scientists, what global properties can they infer about the whole---infinite, unexplored---one dimensional (1D) TI system? In this paper, we will focus on two: (i) entanglement (namely, whether the local quantum state describing the neighborhood is incompatible with an underlying multiseparable state for the whole system) and (ii) Bell nonlocality (namely, whether it is impossible to simulate the statistics of the whole system with a classical device). 

Entanglement and nonlocality are two hallmark features of our world which signify a clear departure from classical physics~\cite{RevModPhys.81.865,RevModPhys.86.419}. The problem of certifying whether a given state is entangled and/or nonlocal is important in order to determine the type of correlations that are furnished by the state or to characterize the state as a useful resource for various quantum processing tasks. Most research in the entanglement of TI quantum systems has been focused on the entanglement between two distant sites~\cite{wootters_chains,PhysRevLett.92.087903,Osterloh2002rq}, or between a region of the chain and the rest of it~\cite{RevModPhys.82.277}. The multiseparability of quantum spin chains has been studied in eg.~\cite{PhysRevLett.103.100502,PhysRevLett.106.020401,Hauke}, and the permutation-invariant systems studied in~\cite{Tura13062014,Tura2015370}, when placed on a line, can be seen as translation-invariant as well. Unfortunately, the certification of entanglement or nonlocality in the aforementioned works requires the knowledge of correlations between arbitrarily distant sites, impossible to acquire in the \emph{gedankenexperiment} described above. Prior works on the non-classicality of infinite translation-invariant systems have focused on how to detect Bell nonlocality directly, i.e., by showing that the probed regions cannot be described classically~\cite{nonviolation,PhysRevA.73.022303}. The problem of global nonlocality detection in 1D TI systems via local measurements has been studied for finite number of parties~\cite{jordi_ti,jordi_ti2}. 

In this paper, we study the problem of certifying entanglement and nonlocality in 1D infinite TI systems from the information available to a finite number of parties exploring the chain. As we will show, our results also apply to the verification of entanglement and nonlocality in large (finite but not TI) 1D quantum many-body systems, so they may be particularly relevant for condensed matter experiments. In this regard, since all our entanglement witnesses and Bell inequalities only depend on near-neighbor two-body correlators, the quantum states maximally minimizing them can be prepared by cooling a condensed matter system described by a local TI Hamiltonian \footnote{In contrast to \emph{Bell local}, which is also called \emph{classical} and will be defined later, the word \emph{local} here means each term in the Hamiltonian only acts on a small neighborhood of a given site.}.

\section{Conceptual setup}
Consider infinitely many sites distributed equidistantly along a line. Any number of consecutive sites of the chain, say $1,...,n$, is described by a \emph{state} $\omega_{1,\ldots,n}$. Depending on the level of our description, such a state will correspond to (i) a quantum state $\rho_{1,2,...,n}$, or (ii) a conditional probability distribution (also called a \emph{box}) $P_{1,2,...,n}(a_1,a_{2},...,a_n|x_1,x_{2},...,x_n)$ for the values $a_1,a_{2},...,a_n$ of the local properties $x_1,x_{2},...,x_n$ at sites $1,2,...,n$, satisfying the non-signaling condition~\cite{popescu1994quantum}. To model separability and locality we will further require a coarser level of description: (iii) a probability distribution $P_{1,2,\ldots,n}(a_1,a_{2},...,a_n)$ for the values $a_1,a_{2},...,a_n$ of a local system property at sites $1,2,...,n$ respectively.


The quantum state $\rho$ of an infinite chain is \emph{multiseparable} if it can be decomposed in the form $\rho_{1,\ldots,n}=\int d\vec{\varrho} P(\varrho_{1},\ldots,\varrho_n)\varrho_1\otimes\ldots\otimes\varrho_n$,
for all $n$, where $P(\varrho_{1},\ldots,\varrho_n)$ is a probability density and $\varrho_{1},\ldots,\varrho_{n}$ are single-site density matrices. We will refer to $P(\varrho_{1},\ldots,\varrho_n)$ as a \emph{separable decomposition} for the state $\rho_{1,\ldots,n}$. Alternatively, the set of multiseparable states is the set of all states which can be generated via quantum one-site operations and classical communication, i.e., without the need of making the subsystems interact. 

Analogously, the box $P$ of an infinite chain is \emph{local} or \emph{classical}---that is, it does not violate any Bell inequality---if, for any $n$, the box $P_{1,2,...,n}(a_1,a_2,...,a_n|x_1,x_2,...,x_n)$ admits a local hidden variable model \cite{RevModPhys.86.419}. Namely, if there exists a probability distribution $\mu(\lambda)$ over a hidden variable $\lambda$ such that $P_{1,...,n}(a_1,a_2,...,a_n|x_1,x_2,...,x_n)=\sum_{\lambda}\mu(\lambda)Q_1(a_1|x_1,\lambda)Q_2(a_2|x_2,\lambda)...Q_n(a_n|x_n,\lambda)$, where $Q_k(a_k|x_k,\lambda)$ is a probability distribution for outcome $a_k$ at site $k$. Intuitively, the set of local boxes is the set of all black boxes which can be simulated via classical devices.

In this paper, we will be interested mostly in translation invariant (TI) states. An \emph{infinite TI state} $\Omega$ for the whole chain is defined as an infinite \emph{sequence} of states $(\Omega_{1,2,\ldots,s})_s$ satisfying $\Omega_{k,\ldots,k+m}=\Omega_{k+1,\ldots,k+m+1}$  for all $m,k$. 

Our goal is, given access to $\Omega_{1,2,...,r}$, to determine global properties of the infinite TI state $\Omega$, such as its entanglement (when $\Omega$ is a quantum state) or its nonlocality (when $\Omega$ corresponds to a box). However, our results can also be applied to finite and non-TI systems by means of the following -widely known- \emph{symmetrization procedure}, that allows us to construct an infinite TI state $\Omega$, given an $n$-site state $\omega_{1,2,...,n}$.

Consider a state $\Gamma$ of an infinite chain composed by infinitely many copies of $\omega$, i.e., $\Gamma \equiv \omega_{1,...,n} \otimes \omega_{1,...,n} \otimes \ldots$. The symbol $\otimes$ denotes the composition law of the state under consideration (tensor product for Hilbert spaces, multiplication for probability distributions, etc.). The state $\Gamma$ is clearly invariant under translations of $n$ sites. We can construct an infinite TI state $\Omega$ by summing $\Gamma$ with states obtained by translating $\Gamma$ by $k\in\{1,2,\ldots,n\}$ sites, each time with probability $\frac{1}{n}$, see Fig.~\ref{fig:symm}. State $\Omega$ is called a \emph{symmetrization} of $\omega_{1,2,\ldots,n}$, and the marginal $\Omega_{1,2,\ldots,r}$ of $r\leq n$ sites is given by
\begin{align}
\Omega_{1,\ldots,r}=\frac{1}{n}\left(\sum_{k=1}^{n-r+1}\omega_{k,\ldots,r+k-1}
+\sum_{k=1}^{r-1}\omega_{n-r+k+1,\ldots,n}\otimes \omega_{1,\ldots,k}\right).\label{symm_proc}
\end{align}
Many experimental setups in condensed matter physics do not allow the experimenter to probe each individual site of an $n$-site spin chain. Instead, one can estimate, via neutron diffraction, the average 2-site correlators or \emph{static structure factors} \cite{marshall1971theory}:
\be
\tilde{\omega}^{(n)}_{\left[r\right]}\equiv\frac{1}{n-r+1}\sum_{k=1}^{n-r+1} \omega_{k,k+r-1}.
\label{structure}
\ee
\noindent Given a large chain with structure factors $\{\omega^{(n)}_{\left[r\right]}\}_r$, the symmetrization procedure (\ref{symm_proc}) hence implies that there exists a TI state $\Omega$ with the property:
\begin{align}
\Omega_{1,r}=\omega^{(n)}_{\left[r\right]}+O\left(\frac{r}{n}\right).
\end{align}
\noindent Moreover, if $\omega_{1,2,\ldots,n}$ is a separable quantum state or a local or quantum box, then so is $\Omega$. It follows that, if the structure factors of the system violate an entanglement witness or Bell inequality for TI systems by an amount greater than $O(r/n)$, then the $n$-site chain must be, respectively, entangled or nonlocal. This means that, as long as we restrict ourselves to devising two-body entanglement witnesses and Bell inequalities, the conclusions which we will extract regarding TI systems also apply to large 1D condensed matter systems. Moreover, the witnesses constructed this way will be experimentally friendly since by construction they are maximally violated by the ground states of local TI Hamiltonians.

\begin{figure}
  \centering
  \includegraphics[width=8.5 cm]{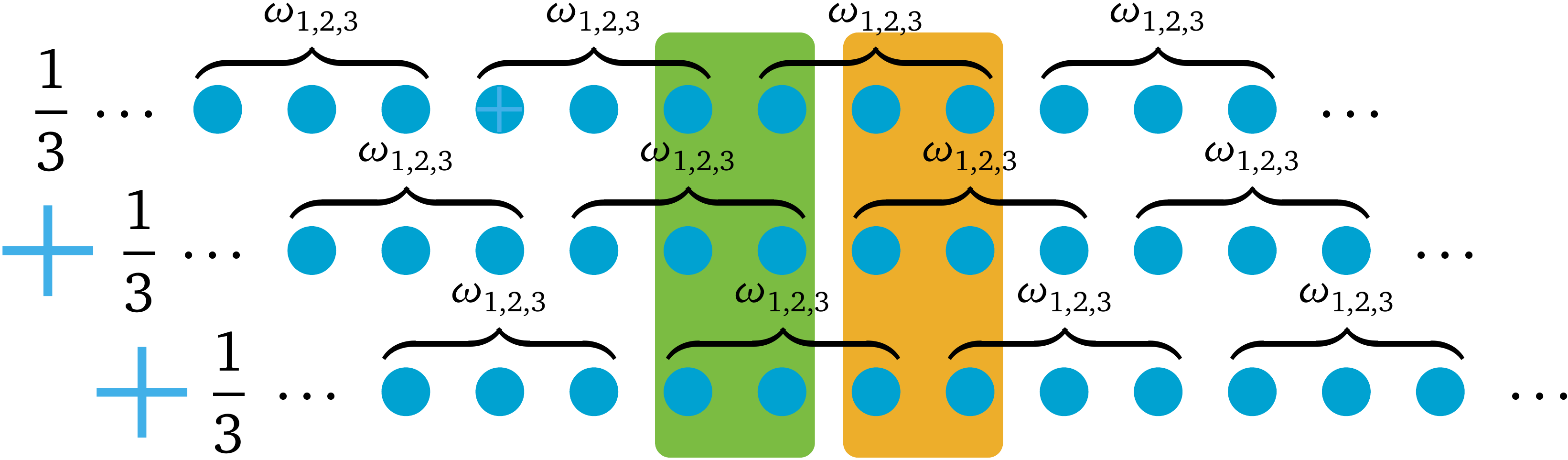}
  \caption{\textbf{The symmetrized state for $n=3$.} The green and yellow rectangles highlight the partial terms that contribute to two neighbouring 2-site reduced states respectively. These are seen to be equal (since they are sum of the same three partial terms).}
  \label{fig:symm}
\end{figure}

With the required notation and tools in place, we now turn to addressing our two main goals, namely, the detection of entanglement and nonlocality in large 1D chains by only using local information.

\section{Entanglement detection in large 1D chains}
Given a partial quantum state $\rho_{1,...,r}$, obtained by ignoring all but $r$ consecutive sites of an infinite TI quantum state, how can we ascertain whether the total state $\rho_{-\infty,...,\infty}$ is entangled? Of course, if $\rho_{1,...,r}$ itself is entangled, e.g. if it is not positive under partial transposition (PPT)~\cite{PhysRevLett.77.1413}, then nothing needs to be done. On the other hand, it is possible, as we illustrate below, that the total quantum state is \textit{entangled} even when $\rho_{1,...,r}$ is multi-separable. Given just access to $\rho_{1,...,r}$, the only question we can hope to answer is whether there exists a total multiseparable TI state from which the given partial state can be obtained by ignoring sites, i.e., whether $\rho_{1,...,r}$ admits an \emph{TI and separable (TIS) extension}.

Before addressing this problem, let us consider a related one: given an $r$-site probability distribution $P_{1,...,r}(x_1,...,x_r)$, decide whether it can be realized as the marginal of an infinite TI distribution $Q$. The solution of this problem is known for some time~\cite{frustration,Schlijper1985,Pivato,Goldstein2017} and remains part of the folklore of TI systems: $P_{1,...,r}(x_1,...,x_r)$ admits a TI extension if and only if
\be
P_{1,\ldots,r-1}(x_1,\ldots,x_{r-1})=P_{2,\ldots,r}(x_1,\ldots,x_{r-1}).
\label{consist}
\ee
 To see why this is true, consider the conditional probability distribution $P(x_n|x_1,\ldots,x_{r-1})\equiv \frac{P_{1,\ldots,r}(x_1,\ldots,x_r)}{P_{1,\ldots,r-1}(x_1,\ldots,x_{r-1})}$ (if the denominator is $0$, then any distribution is allowed). We can recursively extend the probability distribution $P_{1,...,r}(x_1,...,x_r)$ to a sequence $(Q_{1,\ldots,s})_{s}$ of probability distributions for increasingly larger chains $s\geq r$ by means of the recurrence relation:
\begin{align}\label{recursion}
Q(x_1,...,x_{s+1})&=Q(x_1,...,x_{s})P(x_{s+1}|x_{s-r+2},\ldots,x_{s}).
\end{align}
It is readily checked that $Q_{1,2,\ldots,r}=P_{1,2,\ldots,r}$, namely, $P_{1,2,\ldots,r}$ is a marginal of $Q_{1,2,\ldots,s+1}$ for $s\geq r$. From \eref{recursion}, it also follows that $Q_{1,2,\ldots,s+1}$ is a TI sequence provided that \eref{consist} holds. A characterization of the extreme points of the set ${\cal T}_r$ of $r$-site TI marginals, i.e., those TI marginals which cannot be expressed as convex combinations of other marginals, can be found in Appendix \ref{app:dominoes}.

Using the solution of the classical TI marginal problem, we will next derive a characterization of the set of states admitting a TIS extension. Assume that $\rho_{1,...,r}$ does indeed admit a TIS extension $\rho$, and let $P(\varrho_1,...,\varrho_n)$ define a separable decomposition for the state $\rho_{1,...,n}$. Applying the symmetrization procedure to $P(\varrho_1,...,\varrho_n)$ we obtain a TI distribution that we can regard as the separable decomposition of a chain state $\bar{\rho}$, whose reduced state $\bar{\rho}_{1,\ldots,r}$ is $O(r/n)$-close to $\rho_{1,\ldots,r}$. Since $n$ was arbitrary, we conclude that, if $\rho_{1,...,r}$ admits a TIS extension, then we can take its separable decomposition to be TI. Invoking Eq.~\ref{consist}, we thus have that an $r$-site quantum state $\rho_{1,...,r}$ admits a TIS extension iff it satisfies
\be
\rho_{1,\ldots,r}=\int d\vec{\varrho} P(\varrho_1,\ldots,\varrho_r)\varrho_1\otimes\ldots\otimes\varrho_r,
\label{sep_decomp}
\ee
with $P_{1,...,r-1}(\varrho_1,\ldots,\varrho_{r-1})=P_{2,...,r}(\varrho_1,\ldots,\varrho_{r-1})$.

Unfortunately, this characterization of TI separability is not very practical to detect entanglement. Indeed, given the state $\rho_{1,..,r}$, how to argue that it does \emph{not} admit a decomposition of the form in Eq.~\ref{sep_decomp}? This motivates us to look for simpler criteria to decide the existence of TIS extensions.

As a first attempt, we can apply the intuition from the characterization of TI probability distributions. Notice that if the state $\rho_{1,\ldots,r}$ has a TIS extension, then it must be separable and satisfy
\begin{align}
\rho_{1,\ldots,r-1}=\rho_{2,\ldots,r}.\label{state_consistency}
\end{align}
Are these conditions also sufficient to guarantee the existence of a TIS extension?

Let $\{\sigma_i\}_{i=x,z,y}$ denote the Pauli matrices. Using the Jordan-Wigner transformation~\cite{JordanWigner1928}, it can be shown that $\tr (\rho_{1,2}\sigma_y\otimes\sigma_x)\leq \frac{2}{\pi}$ for TI states $\rho$ (see Appendix \ref{rho_XY} for the proof). Now, the separable state $\varrho=\frac{1}{2}(\proj{+i}\otimes\proj{+}+\proj{-i}\otimes\proj{-})$, with $\sigma_x\ket{\pm}=\pm\ket{\pm}$, $\sigma_y\ket{\pm i}=\pm\ket{\pm i}$, satisfies $\varrho_1=\varrho_2=\frac{1}{2}\id$, but $\tr(\varrho \sigma_y\otimes\sigma_x)=1$. Thus separability plus condition Eq.~\ref{state_consistency} do not even guarantee the existence of a TI extension, separable or not.

This last observation, however, suggests a stronger criterion for the existence of a TIS extension, namely, to demand the state $\rho_{1,\ldots,r}$ to be both separable and the reduced state of an infinite TI state. Unfortunately, this criterion, although necessary, is still not sufficient to guarantee a TIS extension. To construct a counter example, we will first give two states, one TI and the other TIS, which can be seen as optimally witnessing translation-invariance and translation-invariance plus multiseparability.

First, in Appendix \ref{rho_XY}, we identify a TI state $\rho^1$ that saturates the inequality $\tr (\rho_{1,2}\sigma_y\otimes\sigma_x)\leq \frac{2}{\pi}$, with $\rho^1_{1,2}=\frac{1}{4}\id_4+\frac{1}{2\pi}(\sigma_y\otimes\sigma_x+\sigma_x\otimes\sigma_y)+\frac{1}{\pi^2}\sigma_z^{\otimes 2}$.

Second, in Appendix~\ref{witnesses} it is shown that all states $\rho_{1,2}\in \mathcal{B}(\C^2\otimes\C^2)$ with a TIS extension satisfy
\be
\tr(\rho_{1,2} \sum_{i,j=1}^3T_{ij}\sigma_i\otimes\sigma_j)\leq \frac{1}{2}\max_{\theta\in [0,2\pi]}\|e^{i\theta}T+e^{-i\theta}T^\dagger\|,
\ee
where $T_{ij}\in \mathbb{R}$. Taking $T_{i,j}=\delta_{i,2}\delta_{j,1}$, where $\delta$ is the Kronecker delta, implies that all states which have TIS extensions satisfy $\tr(\rho_{1,2}\sigma_y\otimes\sigma_x)\leq \frac{1}{2}$. This bound is tight, since it can be saturated by the TIS state $\rho^0\equiv\frac{1}{3}\sum_{s=1}^3\varrho^{s+1}\otimes \varrho^{s}$, where $\varrho^{1}=\varrho^{4}$, and  $\varrho^{1}$, $\varrho^{2}$, $\varrho^{3}$ are described, respectively, by the Bloch vectors $\frac{1}{\sqrt{2}}(1,1,0)$, $\frac{1}{\sqrt{2}}(-1,1,0)$, $\frac{1}{\sqrt{2}}(1,-1,0)$ \footnote{To see that this state admits a TIS extension, simply prepare the state $(\varrho^{1}\otimes\varrho^{1}\otimes\varrho^{1})^{\otimes \infty}$, invariant under translations by $3$ sites, and then subject it to a random translation $t=0,1,2$ with probability $1/3$. The resulting TIS state has the two-reduced density matrix $\rho^1$.}.

Now, consider the family of TI states $\rho^{\lambda}\equiv\lambda\rho^1+(1-\lambda)\rho^0$. Clearly, for $\lambda\in (0,1]$, all those states violate the entanglement witness $\langle\sigma_y\otimes\sigma_x\rangle\leq \frac{1}{2}$. Also, it can be verified that $\rho^{\lambda}_{1,2}$ is PPT for $\lambda\leq  \frac{2\pi^2}{12+12\pi-\pi^2}\approx 0.4956$. It follows that, for $\lambda\in (0, 0.4956]$, the states $\rho_{1,2}^\lambda$ are separable~\cite{Horodecki19961} and TI, but all their TI extensions are entangled. Similar effects have been reported in~\cite{tripartite,PhysRevA.93.020104} where the authors construct near-neighbor separable states (local boxes) which only admit entangled (nonlocal) global extensions. 

Thus, even though $\rho^\lambda_{1,2}$ is not entangled, its two-body correlators tell us that there exists a finite system size $n$ such that $\rho^\lambda_{1,...,n}$ is. 
This raises another interesting question, namely, how large $n$ must be. Consider a witness of the form $\tr(W\rho_{1,2})\leq S$ and suppose that $\rho_{1,2}$ violates it by an amount $\Delta>0$, i.e., $\tr(\rho_{1,2}W)=S+\Delta$. If there exists a TI extension $\rho$ of $\rho_{1,2}$ such that $\rho_{1,...,n}$ is separable, then applying the symmetrization procedure in Eq.~\ref{symm_proc} to $\rho_{1,...,n}$ would produce a separable TI state $\tilde{\rho}$ with $\tilde{\rho}_{1,2}=\frac{n-1}{n}\rho_{1,2}+\frac{1}{n}\rho_1\otimes\rho_1$. Since $\tilde{\rho}$ is separable and TI, it must satisfy $\tr(\tilde{\rho}_{1,2}W)\leq S$, from which it follows that $n\leq \frac{S-\tr(W\rho_1^{\otimes 2})}{\Delta}+1$. 

Therefore, contrary to the ordinary entanglement detection setup, the degree of violation of a linear entanglement witness has a clear operational meaning in the TI scenario thanks to an intrinsic notion of size: its inverse is proportional to the number $n$ of consecutive sites which $n$ parties must share in order to hold an entangled resource. This quantitative relation between nonseparability and size can be seen to hold for arbitrary entanglement witnesses, not just bipartite ones. It also extends to the realm of Bell nonlocality, that we will study next.

\section{Detecting nonlocality in large 1D chains}
Before tackling the characterization of nonlocality in TI systems, we will argue that certain infinite TI quantum systems are indeed non-classical. Take any bipartite quantum state $\rho\in \mathcal{B}(\H^{\otimes 2})$ which allows two parties to violate a Bell inequality $B$, and consider an infinite chain where each site $k$ holds two systems with Hilbert spaces $\H^{(k)}_1,\H^{(k)}_2$, with $\mbox{dim}(\H^{(k)}_1)=\mbox{dim}(\H^{(k)}_2)=\mbox{dim}(\H)$. If we distribute a copy of $\rho \in \mathcal{B}(\H^{(k)}_1 \otimes \H^{(k+1)}_2)$ to all neighbouring pairs ($k$, $k+1$), we end up with a TI chain configuration with the property that any pair of nearest neighbors can violate $B$. 

In this construction, the non-classicality of the whole chain is established by proving that the probed sites $1,2$ do not admit a local hidden variable model. Is this necessarily the case, or are there situations where the probed sites are classical, but nonetheless incompatible with an infinite classical TI box? We will need a complete characterization of nonlocality in 1D TI systems in order to answer this question.

Assume that the data available is of the form $P_{1,...,r}(a_1,...,a_r|x_1,...,x_r)$, where $a_k\in\{1,\ldots,d\}$ and $x_k\in\{0,\ldots,m-1\}$, with the promise that it arises from an infinite TI box. The task is to decide whether there exists a TI box $P$, compatible with the experimental data, and such that $P_{1,\ldots,n}(a_1,\ldots,a_n|x_1,\ldots,x_n)$ admits a local hidden variable model for all $n$. By Fine's theorem ~\cite{PhysRevLett.48.291}, the existence of a local hidden variable model for $P$ is equivalent to the existence of a global probability distribution $Q(\vec{a}_1,...,\vec{a}_\infty)$, with $\vec{a}_k\in\{1,\ldots,d\}^m$ such that 

\be
Q(a^{x_1}_1=b_1, ..., a^{x_n}_n=b_n)=P_{1,...,n}(b_1,...,b_n|x_1,...,x_n),
\ee
\noindent for all $n$. As in the characterization of TIS, we apply the symmetrization procedure over $Q_{1,...,n}$ in the limit $n\to\infty$ and find that we can assume the global distribution $Q$ to be TI.

As a vector of probabilities, the distribution $P_{1,...,r}(a_1,...,a_r|x_1,...,x_r)$ is a linear function $L$ of $Q(\vec{a}_1,\ldots,\vec{a}_r)$, whose only constraint is that it is the marginal of a TI distribution. Since this is equivalent to satisfying Eq.~\ref{consist}, it follows that we can characterize $P_{1,...,r}(a_1,...,a_r|x_1,...,x_r)$ via linear programming \cite{LP}. 

%

The set of all marginal distributions $P_{1,...,r}(a_1,...,a_r|x_1,...,x_r)$ arising from a 1D classical TI chain thus forms a \emph{convex polytope}, i.e., the convex hull of a finite number of vertices. This in itself is a very surprising result:  due to the presence of infinitely many parties, there is no a priori reason to expect this set to be a polytope. Actually, in the 2D case, the boundary of the corresponding set has both flat and smoothly curved parts and does not admit an exact computational characterization \cite{2dti}.

Each polytope has a dual description in terms of a finite set of linear inequalities or \emph{facets}. The transformation between these two descriptions can be done algorithmically albeit generally with high complexity. Using the software \emph{PANDA}~\cite{panda}, we enumerated the facets of the classical polytope describing two-input/two-output nearest-neighbor and next-to-nearest-neighbor distributions (that is, $P_{1,2}(a_1,a_2|x_1,x_2)$, $P_{1,3}(a_1,a_3|x_1,x_3)$, with $x_k\in\{0,1\}$, $a_k\in\{1,2\}$). The polytope has 32372 facets, which reduce to 2102 inequivalent inequalities after taking one-site relabellings and the reflection of the chain into consideration. Two examples are given by
\begin{align}
I_{\text{T}}&\equiv-2E_0-4E_1-2E^{1,2}_{00}+2E^{1,2}_{01}+2E^{1,2}_{10}+2E^{1,2}_{11}\nonumber\\
&+E^{1,3}_{00}+E^{1,3}_{11}\geq -4,\label{ineq1195}\\
I_{\text{G}}&\equiv -4E_0-6E_1-3E^{1,2}_{00}+2E^{1,2}_{01}+3E^{1,2}_{10}+2E^{1,2}_{11}\nonumber\\
&+2E^{1,3}_{00}+E^{1,3}_{10}+E^{1,3}_{11}\geq -6,\label{ineq3621}
\end{align}
where $E_{x}\equiv \langle A^1_x\rangle=\sum_{a=0,1}P_{1}(a|x)(-1)^a$ and $E_{xy}^{i,j}\equiv \langle A^i_x A^j_y\rangle=\sum_{a,b=0,1}P_{i,j}(a,b|x,y)(-1)^a(-1)^b$. Here $A^i_x$ denotes the observable corresponding to measuring property $x$ at site $i$ and assigning it the numerical value $(-1)^a$.

In order to estimate the quantum value of an inequality given by $I\equiv \sum_{x,y=0,1}\frac{1}{2}C_x\cdot E_x+C_{xy}^{AB}\cdot E_{xy}^{1,2}+C_{xy}^{AC}\cdot E_{xy}^{1,3}$, we associate a quantum system of dimension $d=4$ to each of the sites. $d=4$ is chosen because we could not violate any inequality by using quantum systems with lower dimensions on each site. We then identify the observables $A_0,\;A_1$ at each site with the operators $A_0 \equiv M(0,0),\; A_1 \equiv M(\theta,\phi)$ where
\begin{align}
M(\theta,\phi)&\equiv\begin{pmatrix}
 \cos (\theta ) & \sin (\theta ) & 0 & 0 \\
 \sin (\theta ) & -\cos (\theta ) & 0 & 0 \\
 0 & 0 & \cos (\phi ) & \sin (\phi ) \\
 0 & 0 & \sin (\phi ) & -\cos (\phi )
\end{pmatrix}.
\end{align}


This way, fixing $\theta,\phi$, we can map the original Bell inequality to the 3-local Hamiltonian
\begin{align}
H&\equiv \sum_{i=1}^{\infty}\sum_{x,y=0,1}\frac{1}{2}C_x\cdot A_x^i+C_{xy}^{AB}\cdot A_x^i\otimes A_y^{i+1}+C_{xy}^{AC}\cdot A_x^i\otimes A_y^{i+2}.
\end{align}

The minimum quantum value of the Bell inequality (under the corresponding measurement settings) corresponds to the ground state energy per site of this Hamiltonian. The computation of the latter was carried out over infinite Matrix Product States (iMPS) using a combination of the Time Evolving Block Decimation (TEBD) method~\cite{PhysRevLett.98.070201} and the tool \emph{Open Source MPS}~\cite{OSMPS}, which implements a variant of the Density Matrix Renormalization Group (DMRG) method~\cite{DMRG_Schollwoeck}. Using these tools, we find the violations $I_{\text{T}}=-4.1847$ with $\theta=0.077$, $\phi=1.874$ and $I_{\text{G}}=-6.1798$ with  $\theta=6.236$, $\phi=4.175$. More inequalities, including violations using DMRG and lower bounds on the nonsignaling and quantum values can be found in Appendix~\ref{app:ineqs}.

The violations obtained above may not be optimal and a see-saw like method, similar to~\cite{PhysRevA.82.022116}, can be used to enhance them. In such a method, the optimization is divided into two rounds: in one round the measurements are held fixed and the optimization is over the state, and in the other round the state is held fixed and the measurements are optimized. By repeating these two rounds, a see-saw method usually converges to better violations than naive methods such as the one used above. While optimization over states by fixing the measurements can be done using TEBD or DMRG, the optimization over measurements with a fixed state involves further complications because the measurement operators make the objective function bilinear. Fortunately, this bilinearity can be removed if in addition to the 4-dimensional quantum state, each party is given access to a classical TI register, as described by the protocol given in Appendix~\ref{app:variational}. Using this protocol, the violation of $I_{\text{G}}$ can be increased to $-6.1907$. At first glance, it may seem surprising that, by giving them access to shared randomness, the parties are able to increase their violation. Note, though, that the extreme points of TI distributions are not necessarily deterministic.

We verified that the TI local value of $I_{\text{T}}$ cannot be beaten by local tripartite boxes $P_{1,2,3}(a_1,a_2,a_3|x_1,x_2,x_3)$ with $P_{1,2}(a,a'|x,x')=P_{2,3}(a,a'|x,x')$. This means that a TI box $P$ can only violate \eref{ineq1195} when the tripartite box $P_{1,2,3}$ describing the state of three consecutive sites does not admit a classical model. In other words: $I_{\text{T}}$ is just detecting standard tripartite nonlocality. $I_{\text{G}}$, however, is different. While the tripartite box $Q_{1,2,3}$ generated by nearest and next-to-nearest neighbors of the state achieving the violation is not local, TI local noise can be added to $Q$ to turn it into a new TI box $\tilde{Q}$, with $\tilde{Q}_{1,2,3}$ tripartite local, while keeping a violation of $I_{\text{G}}(\tilde{Q})\approx -6.1525$. Similarly to the entanglement case, even though the behavior of the tripartite box $\tilde{Q}_{1,2,3}$ can be reproduced with classical devices, for some $n$ no local hidden variable model can possibly describe an $n$-site box with marginals $\tilde{Q}_{1,2,3}$.

\section{Conclusions}
In this paper we showed how to derive global properties of infinite 1D TI systems when only local information is available. We provided a characterization of the reduced density matrices of TI multiseparable states and used it to derive entanglement witnesses for infinite TI qubit chains. Along the way, we constructed examples of TI states with a separable nearest-neighbours density matrix which nonetheless only admit entangled TI extensions. Regarding nonlocality, we fully characterized the set of $r$-partite boxes obtained by probing the sites of a classical infinite TI chain. Similarly to the entanglement case, we identified a classical tripartite box which only admits non-classical TI extensions.

For future research, it would be interesting to develop effective methods to bound the nonlocality of TI quantum and nonsignalling systems. Also, it would be desirable to extend some of our results to higher spatial dimensions. 

\noindent\emph{Acknowledgements}
M.N. and Z.W. acknowledge the FQXi grant `Towards an almost quantum physical theory'. The authors would also like to thank Tam\'as V\'ertesi for stimulating discussions and Valentin Stauber, Daniel Jaschke, Michael Wall for their help with part of the MPS simulation.

\bibliography{../biblio}

\begin{thebibliography}{36}%
\makeatletter
\providecommand \@ifxundefined [1]{%
 \@ifx{#1\undefined}
}%
\providecommand \@ifnum [1]{%
 \ifnum #1\expandafter \@firstoftwo
 \else \expandafter \@secondoftwo
 \fi
}%
\providecommand \@ifx [1]{%
 \ifx #1\expandafter \@firstoftwo
 \else \expandafter \@secondoftwo
 \fi
}%
\providecommand \natexlab [1]{#1}%
\providecommand \enquote  [1]{``#1''}%
\providecommand \bibnamefont  [1]{#1}%
\providecommand \bibfnamefont [1]{#1}%
\providecommand \citenamefont [1]{#1}%
\providecommand \href@noop [0]{\@secondoftwo}%
\providecommand \href [0]{\begingroup \@sanitize@url \@href}%
\providecommand \@href[1]{\@@startlink{#1}\@@href}%
\providecommand \@@href[1]{\endgroup#1\@@endlink}%
\providecommand \@sanitize@url [0]{\catcode `\\12\catcode `\$12\catcode
  `\&12\catcode `\#12\catcode `\^12\catcode `\_12\catcode `\%12\relax}%
\providecommand \@@startlink[1]{}%
\providecommand \@@endlink[0]{}%
\providecommand \url  [0]{\begingroup\@sanitize@url \@url }%
\providecommand \@url [1]{\endgroup\@href {#1}{\urlprefix }}%
\providecommand \urlprefix  [0]{URL }%
\providecommand \Eprint [0]{\href }%
\@ifxundefined \urlstyle {%
  \providecommand \doi  [0]{\begingroup \@sanitize@url \@doi}%
  \providecommand \@doi [1]{\endgroup \@@startlink {\doibase
  #1}doi:\discretionary {}{}{}#1\@@endlink }%
}{%
  \providecommand \doi  [0]{doi:\discretionary{}{}{}\begingroup
  \urlstyle{rm}\Url }%
}%
\providecommand \doibase [0]{http://dx.doi.org/}%
\providecommand \Doi [0]{\begingroup \@sanitize@url \@Doi }%
\providecommand \@Doi  [1]{\endgroup\@@startlink{\doibase#1}\@@Doi}%
\providecommand \@@Doi [1]{#1\@@endlink}%
\providecommand \selectlanguage [0]{\@gobble}%
\providecommand \bibinfo  [0]{\@secondoftwo}%
\providecommand \bibfield  [0]{\@secondoftwo}%
\providecommand \translation [1]{[#1]}%
\providecommand \BibitemOpen [0]{}%
\providecommand \bibitemStop [0]{}%
\providecommand \bibitemNoStop [0]{.\EOS\space}%
\providecommand \EOS [0]{\spacefactor3000\relax}%
\providecommand \BibitemShut  [1]{\csname bibitem#1\endcsname}%
\bibitem [{\citenamefont {Brunner}\ \emph {et~al.}(2014)\citenamefont
  {Brunner}, \citenamefont {Cavalcanti}, \citenamefont {Pironio}, \citenamefont
  {Scarani},\ and\ \citenamefont {Wehner}}]{RevModPhys.86.419}%
  \BibitemOpen
  \bibfield  {author} {\bibinfo {author} {\bibnamefont {Brunner} \bibfnamefont
  {Nicolas}}, \bibinfo {author} {\bibfnamefont {Daniel}\ \bibnamefont
  {Cavalcanti}}, \bibinfo {author} {\bibfnamefont {Stefano}\ \bibnamefont
  {Pironio}}, \bibinfo {author} {\bibfnamefont {Valerio}\ \bibnamefont
  {Scarani}}, \ and\ \bibinfo {author} {\bibfnamefont {Stephanie}\ \bibnamefont
  {Wehner}}} (\bibinfo {year} {2014}),\ \bibfield  {title} {\enquote {\bibinfo
  {title} {{Bell nonlocality}},}\ }\href
  {http://link.aps.org/doi/10.1103/RevModPhys.86.419} {\bibfield  {journal}
  {\bibinfo  {journal} {Rev. Mod. Phys.},\ }\textbf {\bibinfo {volume} {86}},\
  \bibinfo {pages} {419--478}}\BibitemShut {NoStop}%
\bibitem [{\citenamefont {Cramer}\ \emph {et~al.}(2011)\citenamefont {Cramer},
  \citenamefont {Plenio},\ and\ \citenamefont
  {Wunderlich}}]{PhysRevLett.106.020401}%
  \BibitemOpen
  \bibfield  {author} {\bibinfo {author} {\bibnamefont {Cramer} \bibfnamefont
  {Marcus}}, \bibinfo {author} {\bibfnamefont {Martin~Bodo}\ \bibnamefont
  {Plenio}}, \ and\ \bibinfo {author} {\bibfnamefont {Harald}\ \bibnamefont
  {Wunderlich}}} (\bibinfo {year} {2011}),\ \bibfield  {title} {\enquote
  {\bibinfo {title} {{Measuring Entanglement in Condensed Matter Systems}},}\
  }\href {http://link.aps.org/doi/10.1103/PhysRevLett.106.020401} {\bibfield
  {journal} {\bibinfo  {journal} {Phys. Rev. Lett.},\ }\textbf {\bibinfo
  {volume} {106}},\ \bibinfo {pages} {020401}}\BibitemShut {NoStop}%
\bibitem [{\citenamefont {Eisert}\ \emph {et~al.}(2010)\citenamefont {Eisert},
  \citenamefont {Cramer},\ and\ \citenamefont {Plenio}}]{RevModPhys.82.277}%
  \BibitemOpen
  \bibfield  {author} {\bibinfo {author} {\bibnamefont {Eisert} \bibfnamefont
  {Jens}}, \bibinfo {author} {\bibfnamefont {Marcus}\ \bibnamefont {Cramer}}, \
  and\ \bibinfo {author} {\bibfnamefont {Martin~Bodo}\ \bibnamefont {Plenio}}}
  (\bibinfo {year} {2010}),\ \bibfield  {title} {\enquote {\bibinfo {title}
  {{\textit{Colloquium} : Area laws for the entanglement entropy}},}\ }\href
  {http://link.aps.org/doi/10.1103/RevModPhys.82.277} {\bibfield  {journal}
  {\bibinfo  {journal} {Rev. Mod. Phys.},\ }\textbf {\bibinfo {volume} {82}},\
  \bibinfo {pages} {277--306}}\BibitemShut {NoStop}%
\bibitem [{\citenamefont {Fine}(1982)}]{PhysRevLett.48.291}%
  \BibitemOpen
  \bibfield  {author} {\bibinfo {author} {\bibnamefont {Fine} \bibfnamefont
  {Arthur}}} (\bibinfo {year} {1982}),\ \bibfield  {title} {\enquote {\bibinfo
  {title} {{Hidden Variables, Joint Probability, and the Bell Inequalities}},}\
  }\href {http://link.aps.org/doi/10.1103/PhysRevLett.48.291} {\bibfield
  {journal} {\bibinfo  {journal} {Phys. Rev. Lett.},\ }\textbf {\bibinfo
  {volume} {48}},\ \bibinfo {pages} {291--295}}\BibitemShut {NoStop}%
\bibitem [{\citenamefont {Goldstein}\ \emph {et~al.}(2017)\citenamefont
  {Goldstein}, \citenamefont {Kuna}, \citenamefont {Lebowitz},\ and\
  \citenamefont {Speer}}]{Goldstein2017}%
  \BibitemOpen
  \bibfield  {author} {\bibinfo {author} {\bibnamefont {Goldstein}
  \bibfnamefont {Sheldon}}, \bibinfo {author} {\bibfnamefont {Tobias}\
  \bibnamefont {Kuna}}, \bibinfo {author} {\bibfnamefont {Joel~Louis}\
  \bibnamefont {Lebowitz}}, \ and\ \bibinfo {author} {\bibfnamefont
  {Eugene~Richard}\ \bibnamefont {Speer}}} (\bibinfo {year} {2017}),\ \bibfield
   {title} {\enquote {\bibinfo {title} {{Translation Invariant Extensions of
  Finite Volume Measures}},}\ }\href
  {http://dx.doi.org/10.1007/s10955-016-1595-8} {\bibfield  {journal} {\bibinfo
   {journal} {Journal of Statistical Physics},\ }\textbf {\bibinfo {volume}
  {166}}~(\bibinfo {number} {3}),\ \bibinfo {pages} {765--782}}\BibitemShut
  {NoStop}%
\bibitem [{\citenamefont {Hauke}\ \emph {et~al.}(2016)\citenamefont {Hauke},
  \citenamefont {Heyl}, \citenamefont {Tagliacozzo},\ and\ \citenamefont
  {Zoller}}]{Hauke}%
  \BibitemOpen
  \bibfield  {author} {\bibinfo {author} {\bibnamefont {Hauke} \bibfnamefont
  {Philipp}}, \bibinfo {author} {\bibfnamefont {Markus}\ \bibnamefont {Heyl}},
  \bibinfo {author} {\bibfnamefont {Luca}\ \bibnamefont {Tagliacozzo}}, \ and\
  \bibinfo {author} {\bibfnamefont {Peter}\ \bibnamefont {Zoller}}} (\bibinfo
  {year} {2016}),\ \bibfield  {title} {\enquote {\bibinfo {title} {{Measuring
  multipartite entanglement through dynamic susceptibilities}},}\ }\href
  {http://dx.doi.org/10.1038/nphys3700} {\bibfield  {journal} {\bibinfo
  {journal} {Nat. Phys.},\ }\textbf {\bibinfo {volume} {12}}~(\bibinfo {number}
  {8}),\ \bibinfo {pages} {778--782}}\BibitemShut {NoStop}%
\bibitem [{\citenamefont {Horodecki}\ \emph {et~al.}(1996)\citenamefont
  {Horodecki}, \citenamefont {Horodecki},\ and\ \citenamefont
  {Horodecki}}]{Horodecki19961}%
  \BibitemOpen
  \bibfield  {author} {\bibinfo {author} {\bibnamefont {Horodecki}
  \bibfnamefont {Micha{\l}}}, \bibinfo {author} {\bibfnamefont {Pawe{\l}}\
  \bibnamefont {Horodecki}}, \ and\ \bibinfo {author} {\bibfnamefont {Ryszard}\
  \bibnamefont {Horodecki}}} (\bibinfo {year} {1996}),\ \bibfield  {title}
  {\enquote {\bibinfo {title} {Separability of mixed states: necessary and
  sufficient conditions},}\ }\href
  {http://dx.doi.org/10.1016/S0375-9601(96)00706-2} {\bibfield  {journal}
  {\bibinfo  {journal} {Physics Letters A},\ }\textbf {\bibinfo {volume}
  {223}}~(\bibinfo {number} {1--2}),\ \bibinfo {pages} {1 -- 8}}\BibitemShut
  {NoStop}%
\bibitem [{\citenamefont {Horodecki}\ \emph {et~al.}(2009)\citenamefont
  {Horodecki}, \citenamefont {Horodecki}, \citenamefont {Horodecki},\ and\
  \citenamefont {Horodecki}}]{RevModPhys.81.865}%
  \BibitemOpen
  \bibfield  {author} {\bibinfo {author} {\bibnamefont {Horodecki}
  \bibfnamefont {Ryszard}}, \bibinfo {author} {\bibfnamefont {Pawe\l{}}\
  \bibnamefont {Horodecki}}, \bibinfo {author} {\bibfnamefont {Micha\l{}}\
  \bibnamefont {Horodecki}}, \ and\ \bibinfo {author} {\bibfnamefont {Karol}\
  \bibnamefont {Horodecki}}} (\bibinfo {year} {2009}),\ \bibfield  {title}
  {\enquote {\bibinfo {title} {{Quantum entanglement}},}\ }\href
  {http://dx.doi.org/10.1103/RevModPhys.81.865} {\bibfield  {journal} {\bibinfo
   {journal} {Rev. Mod. Phys.},\ }\textbf {\bibinfo {volume} {81}}~(\bibinfo
  {number} {2}),\ \bibinfo {pages} {865--942}}\BibitemShut {NoStop}%
\bibitem [{\citenamefont {Jordan}\ and\ \citenamefont
  {Wigner}(1928)}]{JordanWigner1928}%
  \BibitemOpen
  \bibfield  {author} {\bibinfo {author} {\bibnamefont {Jordan} \bibfnamefont
  {Ernst~Pascual}}, \ and\ \bibinfo {author} {\bibfnamefont {Eugene~Paul}\
  \bibnamefont {Wigner}}} (\bibinfo {year} {1928}),\ \bibfield  {title}
  {\enquote {\bibinfo {title} {{\"Uber das Paulische \"Aquivalenzverbot}},}\
  }\href {http://dx.doi.org/10.1007/BF01331938} {\bibfield  {journal} {\bibinfo
   {journal} {Zeitschrift f\"ur Physik},\ }\textbf {\bibinfo {volume}
  {47}}~(\bibinfo {number} {9}),\ \bibinfo {pages} {631--651}}\BibitemShut
  {NoStop}%
\bibitem [{\citenamefont {Krammer}\ \emph {et~al.}(2009)\citenamefont
  {Krammer}, \citenamefont {Kampermann}, \citenamefont {Bru{\ss}},
  \citenamefont {Bertlmann}, \citenamefont {Kwek},\ and\ \citenamefont
  {Macchiavello}}]{PhysRevLett.103.100502}%
  \BibitemOpen
  \bibfield  {author} {\bibinfo {author} {\bibnamefont {Krammer} \bibfnamefont
  {Philipp}}, \bibinfo {author} {\bibfnamefont {Hermann}\ \bibnamefont
  {Kampermann}}, \bibinfo {author} {\bibfnamefont {Dagmar}\ \bibnamefont
  {Bru{\ss}}}, \bibinfo {author} {\bibfnamefont {Reinhold~A.}\ \bibnamefont
  {Bertlmann}}, \bibinfo {author} {\bibfnamefont {Leong~Chuang}\ \bibnamefont
  {Kwek}}, \ and\ \bibinfo {author} {\bibfnamefont {Chiara}\ \bibnamefont
  {Macchiavello}}} (\bibinfo {year} {2009}),\ \bibfield  {title} {\enquote
  {\bibinfo {title} {{Multipartite Entanglement Detection via Structure
  Factors}},}\ }\href {http://link.aps.org/doi/10.1103/PhysRevLett.103.100502}
  {\bibfield  {journal} {\bibinfo  {journal} {Phys. Rev. Lett.},\ }\textbf
  {\bibinfo {volume} {103}},\ \bibinfo {pages} {100502}}\BibitemShut {NoStop}%
\bibitem [{\citenamefont {L\"orwald}\ and\ \citenamefont
  {Reinelt}(2015)}]{panda}%
  \BibitemOpen
  \bibfield  {author} {\bibinfo {author} {\bibnamefont {L\"orwald}
  \bibfnamefont {Stefan}}, \ and\ \bibinfo {author} {\bibfnamefont {Gerhard}\
  \bibnamefont {Reinelt}}} (\bibinfo {year} {2015}),\ \bibfield  {title}
  {\enquote {\bibinfo {title} {{PANDA: a software for polyhedral
  transformations}},}\ }\href {http://dx.doi.org/10.1007/s13675-015-0040-0}
  {\bibinfo  {journal} {EURO Journal on Computational Optimization},\ \bibinfo
  {pages} {1--12}}\BibitemShut {NoStop}%
\bibitem [{\citenamefont {Marshall}\ and\ \citenamefont
  {Lovesey}(1971)}]{marshall1971theory}%
  \BibitemOpen
\bibfield  {journal} {  }\bibfield  {author} {\bibinfo {author} {\bibnamefont
  {Marshall} \bibfnamefont {Walter}}, \ and\ \bibinfo {author} {\bibfnamefont
  {Stephen~W}\ \bibnamefont {Lovesey}}} (\bibinfo {year} {1971}),\ \href@noop
  {} {\emph {\bibinfo {title} {{Theory of Thermal Neutron Scattering: The Use
  of Neutrons for the Investigation of Condensed Matter}}}}\ (\bibinfo
  {publisher} {Clarendon Press})\BibitemShut {NoStop}%
\bibitem [{\citenamefont {Miklin}\ \emph {et~al.}(2016)\citenamefont {Miklin},
  \citenamefont {Moroder},\ and\ \citenamefont {G\"uhne}}]{PhysRevA.93.020104}%
  \BibitemOpen
  \bibfield  {author} {\bibinfo {author} {\bibnamefont {Miklin} \bibfnamefont
  {Nikolai}}, \bibinfo {author} {\bibfnamefont {Tobias}\ \bibnamefont
  {Moroder}}, \ and\ \bibinfo {author} {\bibfnamefont {Otfried}\ \bibnamefont
  {G\"uhne}}} (\bibinfo {year} {2016}),\ \bibfield  {title} {\enquote {\bibinfo
  {title} {Multiparticle entanglement as an emergent phenomenon},}\ }\href
  {https://link.aps.org/doi/10.1103/PhysRevA.93.020104} {\bibfield  {journal}
  {\bibinfo  {journal} {Phys. Rev. A},\ }\textbf {\bibinfo {volume} {93}},\
  \bibinfo {pages} {020104}}\BibitemShut {NoStop}%
\bibitem [{\citenamefont {Nering}\ and\ \citenamefont {Tucker}(1993)}]{LP}%
  \BibitemOpen
  \bibfield  {author} {\bibinfo {author} {\bibnamefont {Nering} \bibfnamefont
  {Evan~D}}, \ and\ \bibinfo {author} {\bibfnamefont {Albert~W.}\ \bibnamefont
  {Tucker}}} (\bibinfo {year} {1993}),\ \href@noop {} {\emph {\bibinfo {title}
  {{Linear Programs and Related Problems}}}}\ (\bibinfo  {publisher} {Academic
  Press})\BibitemShut {NoStop}%
\bibitem [{\citenamefont {de~Oliveira}\ \emph {et~al.}(2012)\citenamefont
  {de~Oliveira}, \citenamefont {Saguia},\ and\ \citenamefont
  {Sarandy}}]{nonviolation}%
  \BibitemOpen
  \bibfield  {author} {\bibinfo {author} {\bibnamefont {de~Oliveira}
  \bibfnamefont {Thiago~Rodrigues}}, \bibinfo {author} {\bibfnamefont
  {Andreia~Mendon\c{c}a}\ \bibnamefont {Saguia}}, \ and\ \bibinfo {author}
  {\bibfnamefont {Marcelo~Silva}\ \bibnamefont {Sarandy}}} (\bibinfo {year}
  {2012}),\ \bibfield  {title} {\enquote {\bibinfo {title} {{Nonviolation of
  Bell's inequality in translation invariant systems}},}\ }\href@noop {}
  {\bibfield  {journal} {\bibinfo  {journal} {EPL (Europhysics Letters)},\
  }\textbf {\bibinfo {volume} {100}}~(\bibinfo {number} {6}),\ \bibinfo {pages}
  {60004}},\ \Eprint {http://arxiv.org/abs/arXiv:1208.5503} {arXiv:1208.5503}
  \BibitemShut {NoStop}%
\bibitem [{\citenamefont {Osterloh}\ \emph {et~al.}(2002)\citenamefont
  {Osterloh}, \citenamefont {Amico}, \citenamefont {Falci},\ and\ \citenamefont
  {Fazio}}]{Osterloh2002rq}%
  \BibitemOpen
  \bibfield  {author} {\bibinfo {author} {\bibnamefont {Osterloh} \bibfnamefont
  {Andreas}}, \bibinfo {author} {\bibfnamefont {Luigi}\ \bibnamefont {Amico}},
  \bibinfo {author} {\bibfnamefont {Giuseppe}\ \bibnamefont {Falci}}, \ and\
  \bibinfo {author} {\bibfnamefont {Rosario}\ \bibnamefont {Fazio}}} (\bibinfo
  {year} {2002}),\ \bibfield  {title} {\enquote {\bibinfo {title} {Scaling of
  entanglement close to a quantum phase transition},}\ }\href
  {http://dx.doi.org/10.1038/416608a} {\bibfield  {journal} {\bibinfo
  {journal} {Nature},\ }\textbf {\bibinfo {volume} {416}}~(\bibinfo {number}
  {6881}),\ \bibinfo {pages} {608--610}}\BibitemShut {NoStop}%
\bibitem [{\citenamefont {P\'al}\ and\ \citenamefont
  {V\'ertesi}(2010)}]{PhysRevA.82.022116}%
  \BibitemOpen
  \bibfield  {author} {\bibinfo {author} {\bibnamefont {P\'al} \bibfnamefont
  {K\'aroly~F}}, \ and\ \bibinfo {author} {\bibfnamefont {Tam\'as}\
  \bibnamefont {V\'ertesi}}} (\bibinfo {year} {2010}),\ \bibfield  {title}
  {\enquote {\bibinfo {title} {{Maximal violation of a bipartite three-setting,
  two-outcome Bell inequality using infinite-dimensional quantum systems}},}\
  }\href {http://link.aps.org/doi/10.1103/PhysRevA.82.022116} {\bibfield
  {journal} {\bibinfo  {journal} {Phys. Rev. A},\ }\textbf {\bibinfo {volume}
  {82}},\ \bibinfo {pages} {022116}}\BibitemShut {NoStop}%
\bibitem [{\citenamefont {Peres}(1996)}]{PhysRevLett.77.1413}%
  \BibitemOpen
  \bibfield  {author} {\bibinfo {author} {\bibnamefont {Peres} \bibfnamefont
  {Asher}}} (\bibinfo {year} {1996}),\ \bibfield  {title} {\enquote {\bibinfo
  {title} {{Separability Criterion for Density Matrices}},}\ }\href
  {http://link.aps.org/doi/10.1103/PhysRevLett.77.1413} {\bibfield  {journal}
  {\bibinfo  {journal} {Phys. Rev. Lett.},\ }\textbf {\bibinfo {volume} {77}},\
  \bibinfo {pages} {1413--1415}}\BibitemShut {NoStop}%
\bibitem [{\citenamefont {Pivato}(2001)}]{Pivato}%
  \BibitemOpen
  \bibfield  {author} {\bibinfo {author} {\bibnamefont {Pivato} \bibfnamefont
  {Marcus}}} (\bibinfo {year} {2001}),\ \bibfield  {title} {\enquote {\bibinfo
  {title} {{Building a stationary stochastic process from a finite-dimensional
  marginal}},}\ }\href {http://dx.doi.org/10.4153/CJM-2001-016-3} {\bibfield
  {journal} {\bibinfo  {journal} {Canadian Journal of Mathematics},\ }\textbf
  {\bibinfo {volume} {53}}~(\bibinfo {number} {2}),\ \bibinfo {pages}
  {382--413}}\BibitemShut {NoStop}%
\bibitem [{\citenamefont {Popescu}\ and\ \citenamefont
  {Rohrlich}(1994)}]{popescu1994quantum}%
  \BibitemOpen
  \bibfield  {author} {\bibinfo {author} {\bibnamefont {Popescu} \bibfnamefont
  {Sandu}}, \ and\ \bibinfo {author} {\bibfnamefont {Daniel}\ \bibnamefont
  {Rohrlich}}} (\bibinfo {year} {1994}),\ \bibfield  {title} {\enquote
  {\bibinfo {title} {{Quantum nonlocality as an axiom}},}\ }\href
  {http://dx.doi.org/10.1007/BF02058098} {\bibfield  {journal} {\bibinfo
  {journal} {Foundations of Physics},\ }\textbf {\bibinfo {volume}
  {24}}~(\bibinfo {number} {3}),\ \bibinfo {pages} {379--385}}\BibitemShut
  {NoStop}%
\bibitem [{\citenamefont {Schlijper}(1985)}]{Schlijper1985}%
  \BibitemOpen
  \bibfield  {author} {\bibinfo {author} {\bibnamefont {Schlijper}
  \bibfnamefont {Antoine~Gerard}}} (\bibinfo {year} {1985}),\ \bibfield
  {title} {\enquote {\bibinfo {title} {{On some variational approximations in
  two-dimensional classical lattice systems}},}\ }\href
  {http://dx.doi.org/10.1007/BF01010524} {\bibfield  {journal} {\bibinfo
  {journal} {Journal of Statistical Physics},\ }\textbf {\bibinfo {volume}
  {40}}~(\bibinfo {number} {1}),\ \bibinfo {pages} {1--27}}\BibitemShut
  {NoStop}%
\bibitem [{\citenamefont {Schollw\"{o}ck}(2011)}]{DMRG_Schollwoeck}%
  \BibitemOpen
  \bibfield  {author} {\bibinfo {author} {\bibnamefont {Schollw\"{o}ck}
  \bibfnamefont {Ulrich}}} (\bibinfo {year} {2011}),\ \bibfield  {title}
  {\enquote {\bibinfo {title} {{The density-matrix renormalization group in the
  age of matrix product states}},}\ }\href
  {http://dx.doi.org/10.1016/j.aop.2010.09.012} {\bibfield  {journal} {\bibinfo
   {journal} {Annals of Physics},\ }\textbf {\bibinfo {volume} {326}}~(\bibinfo
  {number} {1}),\ \bibinfo {pages} {96 -- 192}}\BibitemShut {NoStop}%
\bibitem [{\citenamefont {T\'oth}\ \emph {et~al.}(2006)\citenamefont {T\'oth},
  \citenamefont {G\"uhne},\ and\ \citenamefont {Briegel}}]{PhysRevA.73.022303}%
  \BibitemOpen
  \bibfield  {author} {\bibinfo {author} {\bibnamefont {T\'oth} \bibfnamefont
  {G\'eza}}, \bibinfo {author} {\bibfnamefont {Otfried}\ \bibnamefont
  {G\"uhne}}, \ and\ \bibinfo {author} {\bibfnamefont {Hans~J.}\ \bibnamefont
  {Briegel}}} (\bibinfo {year} {2006}),\ \bibfield  {title} {\enquote {\bibinfo
  {title} {{Two-setting Bell inequalities for graph states}},}\ }\href
  {http://link.aps.org/doi/10.1103/PhysRevA.73.022303} {\bibfield  {journal}
  {\bibinfo  {journal} {Phys. Rev. A},\ }\textbf {\bibinfo {volume} {73}},\
  \bibinfo {pages} {022303}}\BibitemShut {NoStop}%
\bibitem [{\citenamefont {Tura}\ \emph {et~al.}(2015)\citenamefont {Tura},
  \citenamefont {Augusiak}, \citenamefont {Sainz}, \citenamefont {L{\"u}cke},
  \citenamefont {Klempt}, \citenamefont {Lewenstein},\ and\ \citenamefont
  {Ac{\'\i}n}}]{Tura2015370}%
  \BibitemOpen
  \bibfield  {author} {\bibinfo {author} {\bibnamefont {Tura} \bibfnamefont
  {Jordi}}, \bibinfo {author} {\bibfnamefont {Remigiusz}\ \bibnamefont
  {Augusiak}}, \bibinfo {author} {\bibfnamefont {Ana~Bel{\'e}n}\ \bibnamefont
  {Sainz}}, \bibinfo {author} {\bibfnamefont {Bernd}\ \bibnamefont
  {L{\"u}cke}}, \bibinfo {author} {\bibfnamefont {Carsten}\ \bibnamefont
  {Klempt}}, \bibinfo {author} {\bibfnamefont {Maciej}\ \bibnamefont
  {Lewenstein}}, \ and\ \bibinfo {author} {\bibfnamefont {Antonio}\
  \bibnamefont {Ac{\'\i}n}}} (\bibinfo {year} {2015}),\ \bibfield  {title}
  {\enquote {\bibinfo {title} {{Nonlocality in many-body quantum systems
  detected with two-body correlators}},}\ }\href
  {http://dx.doi.org/10.1016/j.aop.2015.07.021} {\bibfield  {journal} {\bibinfo
   {journal} {Annals of Physics},\ }\textbf {\bibinfo {volume} {362}},\
  \bibinfo {pages} {370 -- 423}}\BibitemShut {NoStop}%
\bibitem [{\citenamefont {Tura}\ \emph
  {et~al.}(2014){\natexlab{a}}\citenamefont {Tura}, \citenamefont {Augusiak},
  \citenamefont {Sainz}, \citenamefont {V{\'e}rtesi}, \citenamefont
  {Lewenstein},\ and\ \citenamefont {Ac{\'\i}n}}]{Tura13062014}%
  \BibitemOpen
  \bibfield  {author} {\bibinfo {author} {\bibnamefont {Tura} \bibfnamefont
  {Jordi}}, \bibinfo {author} {\bibfnamefont {Remigiusz}\ \bibnamefont
  {Augusiak}}, \bibinfo {author} {\bibfnamefont {Ana~Bel{\'e}n}\ \bibnamefont
  {Sainz}}, \bibinfo {author} {\bibfnamefont {Tamas}\ \bibnamefont
  {V{\'e}rtesi}}, \bibinfo {author} {\bibfnamefont {Maciej}\ \bibnamefont
  {Lewenstein}}, \ and\ \bibinfo {author} {\bibfnamefont {Antonio}\
  \bibnamefont {Ac{\'\i}n}}} (\bibinfo {year} {2014}{\natexlab{a}}),\ \bibfield
   {title} {\enquote {\bibinfo {title} {{Detecting nonlocality in many-body
  quantum states}},}\ }\href {http://dx.doi.org/10.1126/science.1247715}
  {\bibfield  {journal} {\bibinfo  {journal} {Science},\ }\textbf {\bibinfo
  {volume} {344}}~(\bibinfo {number} {6189}),\ \bibinfo {pages}
  {1256--1258}}\BibitemShut {NoStop}%
\bibitem [{\citenamefont {Tura}\ \emph {et~al.}(2017)\citenamefont {Tura},
  \citenamefont {De~las Cuevas}, \citenamefont {Augusiak}, \citenamefont
  {Lewenstein}, \citenamefont {Ac\'{\i}n},\ and\ \citenamefont
  {Cirac}}]{jordi_ti2}%
  \BibitemOpen
  \bibfield  {author} {\bibinfo {author} {\bibnamefont {Tura} \bibfnamefont
  {Jordi}}, \bibinfo {author} {\bibfnamefont {Gemma}\ \bibnamefont {De~las
  Cuevas}}, \bibinfo {author} {\bibfnamefont {Remigiusz}\ \bibnamefont
  {Augusiak}}, \bibinfo {author} {\bibfnamefont {Maciej}\ \bibnamefont
  {Lewenstein}}, \bibinfo {author} {\bibfnamefont {Antonio}\ \bibnamefont
  {Ac\'{\i}n}}, \ and\ \bibinfo {author} {\bibfnamefont {J.~Ignacio}\
  \bibnamefont {Cirac}}} (\bibinfo {year} {2017}),\ \bibfield  {title}
  {\enquote {\bibinfo {title} {{Energy as a Detector of Nonlocality of
  Many-Body Spin Systems}},}\ }\href
  {https://link.aps.org/doi/10.1103/PhysRevX.7.021005} {\bibfield  {journal}
  {\bibinfo  {journal} {Phys. Rev. X},\ }\textbf {\bibinfo {volume} {7}},\
  \bibinfo {pages} {021005}}\BibitemShut {NoStop}%
\bibitem [{\citenamefont {Tura}\ \emph
  {et~al.}(2014){\natexlab{b}}\citenamefont {Tura}, \citenamefont {Sainz},
  \citenamefont {V\'ertesi}, \citenamefont {Ac\'\i{}n}, \citenamefont
  {Lewenstein},\ and\ \citenamefont {Augusiak}}]{jordi_ti}%
  \BibitemOpen
  \bibfield  {author} {\bibinfo {author} {\bibnamefont {Tura} \bibfnamefont
  {Jordi}}, \bibinfo {author} {\bibfnamefont {Ana~Bel{\'e}n}\ \bibnamefont
  {Sainz}}, \bibinfo {author} {\bibfnamefont {Tam\'as}\ \bibnamefont
  {V\'ertesi}}, \bibinfo {author} {\bibfnamefont {Antonio}\ \bibnamefont
  {Ac\'\i{}n}}, \bibinfo {author} {\bibfnamefont {Maciej}\ \bibnamefont
  {Lewenstein}}, \ and\ \bibinfo {author} {\bibfnamefont {Remigiusz}\
  \bibnamefont {Augusiak}}} (\bibinfo {year} {2014}{\natexlab{b}}),\ \bibfield
  {title} {\enquote {\bibinfo {title} {{Translationally invariant multipartite
  Bell inequalities involving only two-body correlators}},}\ }\href
  {http://dx.doi.org/10.1088/1751-8113/47/42/424024} {\bibfield  {journal}
  {\bibinfo  {journal} {Journal of Physics A: Mathematical and Theoretical},\
  }\textbf {\bibinfo {volume} {47}}~(\bibinfo {number} {42}),\ \bibinfo {pages}
  {424024}}\BibitemShut {NoStop}%
\bibitem [{\citenamefont {Vandenberghe}\ and\ \citenamefont
  {Boyd}(1996)}]{sdp}%
  \BibitemOpen
  \bibfield  {author} {\bibinfo {author} {\bibnamefont {Vandenberghe}
  \bibfnamefont {Lieven}}, \ and\ \bibinfo {author} {\bibfnamefont {Stephen}\
  \bibnamefont {Boyd}}} (\bibinfo {year} {1996}),\ \bibfield  {title} {\enquote
  {\bibinfo {title} {{Semidefinite programming}},}\ }\href
  {http://dx.doi.org/10.1137/1038003} {\bibfield  {journal} {\bibinfo
  {journal} {SIAM Review},\ }\textbf {\bibinfo {volume} {38}}~(\bibinfo
  {number} {1}),\ \bibinfo {pages} {49--95}}\BibitemShut {NoStop}%
\bibitem [{\citenamefont {Vidal}(2007)}]{PhysRevLett.98.070201}%
  \BibitemOpen
  \bibfield  {author} {\bibinfo {author} {\bibnamefont {Vidal} \bibfnamefont
  {Guifre}}} (\bibinfo {year} {2007}),\ \bibfield  {title} {\enquote {\bibinfo
  {title} {{Classical Simulation of Infinite-Size Quantum Lattice Systems in
  One Spatial Dimension}},}\ }\href
  {http://link.aps.org/doi/10.1103/PhysRevLett.98.070201} {\bibfield  {journal}
  {\bibinfo  {journal} {Phys. Rev. Lett.},\ }\textbf {\bibinfo {volume} {98}},\
  \bibinfo {pages} {070201}}\BibitemShut {NoStop}%
\bibitem [{\citenamefont {Wall}\ and\ \citenamefont {Carr}(2012)}]{OSMPS}%
  \BibitemOpen
  \bibfield  {author} {\bibinfo {author} {\bibnamefont {Wall} \bibfnamefont
  {Michael~L}}, \ and\ \bibinfo {author} {\bibfnamefont {Lincoln~D}\
  \bibnamefont {Carr}}} (\bibinfo {year} {2012}),\ \bibfield  {title} {\enquote
  {\bibinfo {title} {Out-of-equilibrium dynamics with matrix product states},}\
  }\href {http://dx.doi.org/10.1088/1367-2630/14/12/125015} {\bibfield
  {journal} {\bibinfo  {journal} {New Journal of Physics},\ }\textbf {\bibinfo
  {volume} {14}}~(\bibinfo {number} {12}),\ \bibinfo {pages}
  {125015}}\BibitemShut {NoStop}%
\bibitem [{\citenamefont {Wang}\ and\ \citenamefont
  {Navascu\'es}(2017){\natexlab{a}}}]{2dti}%
  \BibitemOpen
  \bibfield  {author} {\bibinfo {author} {\bibnamefont {Wang} \bibfnamefont
  {Zizhu}}, \ and\ \bibinfo {author} {\bibfnamefont {Miguel}\ \bibnamefont
  {Navascu\'es}}} (\bibinfo {year} {2017}{\natexlab{a}}),\ \bibfield  {title}
  {\enquote {\bibinfo {title} {{Two Dimensional Translation-Invariant
  Probability Distributions: Approximations, Characterizations and No-Go
  Theorems}},}\ }\href {https://arxiv.org/abs/1703.05640} {\bibinfo  {journal}
  {arXiv preprint arXiv:1703.05640}}\BibitemShut {NoStop}%
\bibitem [{\citenamefont {Wang}\ and\ \citenamefont
  {Navascu\'es}(2017){\natexlab{b}}}]{InPrep}%
  \BibitemOpen
\bibfield  {journal} {  }\bibfield  {author} {\bibinfo {author} {\bibnamefont
  {Wang} \bibfnamefont {Zizhu}}, \ and\ \bibinfo {author} {\bibfnamefont
  {Miguel}\ \bibnamefont {Navascu\'es}}} (\bibinfo {year}
  {2017}{\natexlab{b}}),\ \href@noop {} {\enquote {\bibinfo {title} {{Work in
  progress.}}}\ }\BibitemShut {NoStop}%
\bibitem [{\citenamefont {Wolf}\ \emph {et~al.}(2003)\citenamefont {Wolf},
  \citenamefont {Verstraete},\ and\ \citenamefont {Cirac}}]{frustration}%
  \BibitemOpen
  \bibfield  {author} {\bibinfo {author} {\bibnamefont {Wolf} \bibfnamefont
  {Michael~M}}, \bibinfo {author} {\bibfnamefont {Frank}\ \bibnamefont
  {Verstraete}}, \ and\ \bibinfo {author} {\bibfnamefont {J.~Ignacio}\
  \bibnamefont {Cirac}}} (\bibinfo {year} {2003}),\ \bibfield  {title}
  {\enquote {\bibinfo {title} {{Entanglement and Frustration in Ordered
  Systems}},}\ }\href
  {http://www.worldscientific.com/doi/abs/10.1142/S021974990300036X} {\bibfield
   {journal} {\bibinfo  {journal} {International Journal of Quantum
  Information},\ }\textbf {\bibinfo {volume} {01}}~(\bibinfo {number} {04}),\
  \bibinfo {pages} {465--477}}\BibitemShut {NoStop}%
\bibitem [{\citenamefont {Wolf}\ \emph {et~al.}(2004)\citenamefont {Wolf},
  \citenamefont {Verstraete},\ and\ \citenamefont
  {Cirac}}]{PhysRevLett.92.087903}%
  \BibitemOpen
  \bibfield  {author} {\bibinfo {author} {\bibnamefont {Wolf} \bibfnamefont
  {Michael~M}}, \bibinfo {author} {\bibfnamefont {Frank}\ \bibnamefont
  {Verstraete}}, \ and\ \bibinfo {author} {\bibfnamefont {J.~Ignacio}\
  \bibnamefont {Cirac}}} (\bibinfo {year} {2004}),\ \bibfield  {title}
  {\enquote {\bibinfo {title} {{Entanglement Frustration for Gaussian States on
  Symmetric Graphs}},}\ }\href
  {http://link.aps.org/doi/10.1103/PhysRevLett.92.087903} {\bibfield  {journal}
  {\bibinfo  {journal} {Phys. Rev. Lett.},\ }\textbf {\bibinfo {volume} {92}},\
  \bibinfo {pages} {087903}}\BibitemShut {NoStop}%
\bibitem [{\citenamefont {Wootters}(2002)}]{wootters_chains}%
  \BibitemOpen
  \bibfield  {author} {\bibinfo {author} {\bibnamefont {Wootters} \bibfnamefont
  {William~K}}} (\bibinfo {year} {2002}),\ \bibfield  {title} {\enquote
  {\bibinfo {title} {Entangled chains},}\ }\href@noop {} {\bibfield  {journal}
  {\bibinfo  {journal} {Contemporary Mathematics},\ }\textbf {\bibinfo {volume}
  {305}},\ \bibinfo {pages} {299}},\ \Eprint
  {http://arxiv.org/abs/arXiv:quant-ph/0001114} {arXiv:quant-ph/0001114}
  \BibitemShut {NoStop}%
\bibitem [{\citenamefont {W\"urflinger}\ \emph {et~al.}(2012)\citenamefont
  {W\"urflinger}, \citenamefont {Bancal}, \citenamefont {Ac\'{\i}n},
  \citenamefont {Gisin},\ and\ \citenamefont {V\'ertesi}}]{tripartite}%
  \BibitemOpen
  \bibfield  {author} {\bibinfo {author} {\bibnamefont {W\"urflinger}
  \bibfnamefont {Lars~Erik}}, \bibinfo {author} {\bibfnamefont {Jean-Daniel}\
  \bibnamefont {Bancal}}, \bibinfo {author} {\bibfnamefont {Antonio}\
  \bibnamefont {Ac\'{\i}n}}, \bibinfo {author} {\bibfnamefont {Nicolas}\
  \bibnamefont {Gisin}}, \ and\ \bibinfo {author} {\bibfnamefont {Tam\'as}\
  \bibnamefont {V\'ertesi}}} (\bibinfo {year} {2012}),\ \bibfield  {title}
  {\enquote {\bibinfo {title} {{Nonlocal multipartite correlations from local
  marginal probabilities}},}\ }\href
  {https://link.aps.org/doi/10.1103/PhysRevA.86.032117} {\bibfield  {journal}
  {\bibinfo  {journal} {Phys. Rev. A},\ }\textbf {\bibinfo {volume} {86}},\
  \bibinfo {pages} {032117}}\BibitemShut {NoStop}%
\end{thebibliography}%

\clearpage
\begin{appendix}
\section{The polytope ${\cal T}_n$ and its extreme points}\label{app:dominoes}
Having found the necessary and sufficient conditions for a classical distribution to have a TI extension, a natural question to ask is how to find the extreme points of the set $\T_n$ of distributions satisfying Eq.~\ref{consist}. It turns out that the answer to this question has a nice combinatorial flavor. Let $\{1,\ldots,d\}$ be the set of possible outcomes for each random variable $x_k$. A sequence of vectors $\{y^{(s)}\}_s\subset\{1,\ldots,d\}^n$ is called a \emph{domino line} if $(y^{(s)}_1,\ldots,y^{(s)}_{n-2})=(y^{(s+1)}_0,\ldots,y^{(s+1)}_{n-1})$ for all $s$ (see Fig.~\ref{fig:dominoes} for an example). A finite sequence of dominoes $(y^{(s)})_{s=1}^m$ is called a \emph{domino loop} if $(y^{(m)}_1,\ldots,y^{(m)}_{n-2})=(y^{(1)}_0,\ldots,y^{(1)}_{n-1})$. Moreover, the loop is \emph{irreducible} if any strict subset of it cannot be re-ordered to create another loop. By the pigeonhole principle, the maximum size of an irreducible domino loop is finite and satisfies $m\leq d^n$. The following theorem characterizes the extreme points in terms of the domino loops.

\begin{figure}[htbp!]
  \centering
  \includegraphics[width=8.5 cm]{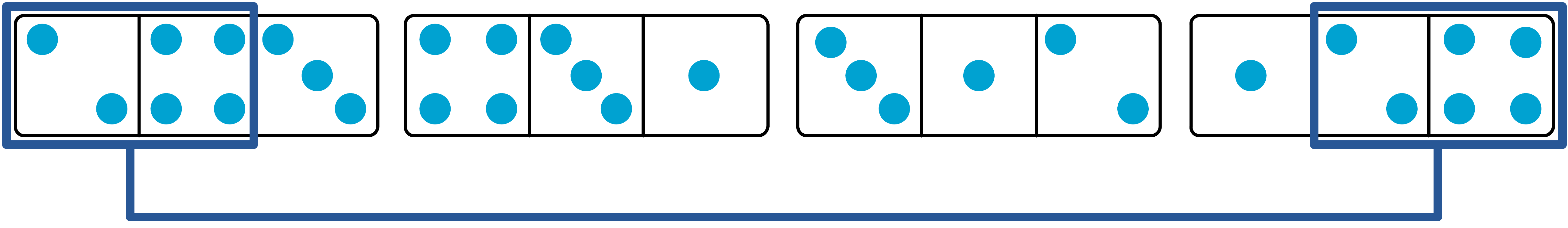}
  \caption{\textbf{An example of an irreducible domino loop.} Taking $n=3, d=6$, each vector $y$ can be depicted as a domino tile with three numbers. The last two numbers in each tile match the first two numbers of the next tile (the next tile of the last tile is the first one), making it a domino loop. This loop is irreducible, since each tile has a unique successor.}
  \label{fig:dominoes}
\end{figure}

\begin{theo}
A probability distribution $P_{\bar{y}}(\bar{x})$ is an extreme point of $\T_n$ iff it is of the form
\be
P_{\bar{y}}(\bar{x})=\frac{1}{m}\sum_{s=0}^{m-1}\delta\{\bar{x},y^{(s)}\},
\label{extreme}
\ee
where $\delta$ is the Kronecker delta and $(y^{(s)})_{s=0}^{m-1}$ is an irreducible domino loop.
\end{theo}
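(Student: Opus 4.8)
The plan is to prove both directions of the equivalence by working with the structure of the set $\T_n$ as a polytope whose points are TI distributions on $\{1,\ldots,d\}^n$. For the ``if'' direction, I would show that a distribution of the form \eref{extreme} built from an \emph{irreducible} domino loop cannot be written as a nontrivial convex combination of two distinct points of $\T_n$. The key structural fact is that a distribution $P\in\T_n$ is determined by a finite set of ``transition weights'': by \eref{consist} and the recursion \eref{recursion}, $P$ is reconstructed from its $(n-1)$-marginal together with the conditional $P(x_n\mid x_1,\ldots,x_{n-1})$, and the $(n-1)$-marginal is itself a stationary distribution of the associated ``de Bruijn''-type Markov chain on $\{1,\ldots,d\}^{n-1}$ whose allowed transitions are exactly the domino tiles in the support of $P$. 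If $P=\lambda P'+(1-\lambda)P''$ with $\lambda\in(0,1)$, then $\mathrm{supp}(P')\subseteq \mathrm{supp}(P)$ and $\mathrm{supp}(P'')\subseteq\mathrm{supp}(P)$, so the support of every summand is contained in the tile set of the loop. Irreducibility says the loop visits its $(n-1)$-states in a single cycle with a unique successor at each state, hence the transition digraph on the states appearing is a single directed cycle; a Markov chain on a single directed cycle has a \emph{unique} stationary distribution (uniform on the cycle), which forces $P'_{1,\ldots,n-1}=P''_{1,\ldots,n-1}=P_{1,\ldots,n-1}$, and since the conditionals are also pinned down (each state has only one outgoing tile), $P'=P''=P$. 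Therefore \eref{extreme} is extreme.

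For the ``only if'' direction, I would start from an arbitrary extreme point $P$ of $\T_n$ and show it has the stated form. Consider the digraph $G_P$ on vertex set $\{1,\ldots,d\}^{n-1}$ with an edge $(y_1,\ldots,y_{n-1})\to(y_2,\ldots,y_n)$ for each $\bar x=(y_1,\ldots,y_n)\in\mathrm{supp}(P)$. The consistency condition \eref{consist} is precisely the statement that the flow defined by $P(\bar x)$ on the edges of $G_P$ is balanced (inflow $=$ outflow) at every vertex, i.e. $P_{1,\ldots,n}$ restricted to its support is a circulation on $G_P$. By a standard flow-decomposition argument, any nonnegative circulation decomposes into a nonnegative combination of indicator circulations of directed cycles; a directed cycle in $G_P$ that visits each of its vertices once corresponds exactly to an irreducible domino loop, and longer cycles decompose further. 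Normalizing, $P$ is a convex combination of distributions of the form \eref{extreme}. Since $P$ is extreme, only one term can appear, so $P$ itself is of that form, and minimality of the cycle decomposition gives irreducibility. I would also note the finiteness claim $m\le d^n$ (in fact $\le d^{n-1}$ suffices for a simple cycle) follows from the pigeonhole bound on cycle lengths, confirming $\T_n$ is genuinely a polytope with finitely many vertices.

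I would fill in the two technical points carefully: (i) that $\mathrm{supp}(P')\subseteq\mathrm{supp}(P)$ whenever $P=\lambda P'+(1-\lambda)P''$ with all three in $\T_n$ and $\lambda\in(0,1)$ — immediate from nonnegativity — and (ii) the flow/circulation decomposition, which is the combinatorial heart of the argument. For (ii) I would phrase it as: given a balanced nonnegative edge function $f$ on a finite digraph, repeatedly locate a directed cycle within $\{e : f(e)>0\}$ (such a cycle exists as long as $f\neq 0$, by balance plus finiteness), subtract its minimum weight along that cycle, reducing the support; iterate. Each extracted cycle, if not already simple, can be split so that each piece is a simple directed cycle, which is an irreducible domino loop.

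\textbf{Main obstacle.} The delicate part is not the flow decomposition per se but pinning down the correspondence between \emph{extreme} points and \emph{irreducible} loops in both directions simultaneously: I must check that an irreducible loop really does give an extreme point (the uniqueness-of-stationary-measure argument above, which relies essentially on irreducibility being exactly ``the support digraph is a single simple cycle'') and, conversely, that the cycle appearing in the decomposition of an extreme point is forced to be irreducible rather than merely a simple directed cycle that might still be decomposable as a TI distribution in some less obvious way. Making ``irreducible domino loop'' and ``simple directed cycle in $G_P$'' literally the same object, and verifying that reordering a reducible loop into two loops corresponds to splitting a non-simple closed walk into cycles, is the bookkeeping that needs the most care.
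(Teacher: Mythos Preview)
Your proposal is correct and, for the ``only if'' direction, essentially identical to the paper's: your flow/circulation decomposition on the de Bruijn-type digraph is exactly the paper's iterative subtraction (find a loop in the support, subtract its minimum weight, shrink the support, repeat), phrased in graph language; the identification ``irreducible domino loop $=$ simple directed cycle'' that you flag as the main obstacle is precisely the bookkeeping the paper leaves implicit, and it goes through as you outline. For the ``if'' direction you take a slightly different and more self-contained route: you argue directly that a simple directed cycle supports a \emph{unique} normalized balanced flow (equivalently, a unique stationary measure), so any $P'$ with $\mathrm{supp}(P')\subseteq\mathrm{supp}(P)$ must equal $P$. The paper instead recycles the decomposition from the first half: it takes an extreme $Q'$ appearing in a putative decomposition of $Q$, finds an irreducible loop inside $\mathrm{supp}(Q')\subseteq\mathrm{supp}(Q)$, notes this loop must coincide with $Q$'s loop by irreducibility, and then subtracts to write $Q'=\mu Q+(1-\mu)R$, contradicting extremality of $Q'$ unless $Q'=Q$. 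Your uniqueness argument avoids assuming extremality of the summand and is arguably cleaner; the paper's has the virtue of reusing a single mechanism for both directions. Either way the content is the same.
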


\begin{proof}
Let $P(x_1,\ldots,x_n)$ be an $n$-partite distribution satisfying Eq.~\ref{consist}, and let $(y_0,\ldots,y_{n-1})$ be such that $P(y_0,\ldots,y_{n-1})\not=0$. Since $P(y_0,\ldots,y_{n-1})\not=0$, it follows that $P_{2,\ldots,n}(y_1,\ldots,y_{n-1})\not=0$. Hence, by Eq.~\ref{consist}, $P_{1,\ldots,n-1}(y_1,\ldots,y_{n-1})\not=0$, and so there must exist $y_{n}\in\{1,\ldots,d\}$ such that $P(y_1,\ldots,y_n)\not=0$. Iterating this reasoning, we end up with a line of dominoes $y^{(0)}, y^{(1)},\ldots$ with the property that $P(y^{(s)})\not=0$ for every $s$. By the pigeon-hole principle, there must exist $k>j$ such that $y^{(k)}=y^{(j)}$, and so the dominoes $y^{(j)},y^{(j+1)},\ldots,y^{(k-1)}$ constitute a domino loop. Let $z^{(1)},\ldots,z^{(m)}$ be any irreducible domino loop contained in the last set. Then, one can verify that

\be
Q(\bar{x})\equiv\frac{1}{m}\sum_{s=0}^{m-1}\delta\{\bar{x},z^{(s)}\}
\ee
 
\noindent satisfies Eq.~\ref{consist}. Define $\lambda\equiv\min_s P(z^{(s)})$, and call $t\in\{0,\ldots,m-1\}$ the argument of the optimizer. Then, $P'(\bar{x})\equiv\frac{P(\bar{x})-m\lambda Q(\bar{x})}{1-m\lambda}$ is a probability distribution complying with Eq.~\ref{consist}, with $P'(z^{(t)})=0$, and such that $P(x)=m\lambda Q(x)+(1-m\lambda)P'(\bar{x})$. Applying this procedure again on $P'(\bar{x})$ and iterating, we obtain a sequence of probability distributions $P(X),P'(x),P''(x),\ldots$ with strictly decreasing support. This sequence must thus come to an end, and so we have proven that the original distribution $P(x)$ is a convex combination of distributions of the form (\ref{extreme}).

It just suffices to prove that the latter distributions are extreme, i.e, that they cannot be decomposed as convex combinations of other distributions satisfying Eq.~\ref{consist}. Suppose, then, that $Q(x)=pQ'(x)+(1-p)Q''(x)$, with $Q'(x)$ being an extreme point of the set ${\cal T}_n$. Using the same argument as before, we can show that there must exist a sequence of vectors $z^{(0)},\ldots,z^{(p-1)}$ with $Q'(z^{(s)})\not=0$ for $s=0,\ldots,p-1$, forming an irreducible domino loop. Such vectors obviously satisfy $Q(z^{(s)})\not=0$. Since the vectors constituting the support of $Q(x)$ are themselves an irreducible domino loop, and $z^{(0)},\ldots,z^{(p-1)}$ are a subset of them, then they must necessarily be the same. Thus, for some $1>\mu>0$, $Q'(x)=\mu Q(x)+(1-\mu)R(x)$, contradicting the hypothesis that $Q'(x)\not=Q(x)$ is an extreme point. 

\end{proof}

\section{Entanglement witnesses}
\label{witnesses}
We wish to find an upper bound on $W_T$, defined as:

\begin{align}
&W_T\equiv\max \tr(\rho_{1,2} \sum_{i,j=1}^3T_{ij}\sigma_i\otimes\sigma_j),\nonumber\\
&\text{s.t. }\rho \text{ is a TIS state}.
\end{align}

In principle, we would need to optimize the above functional over all states of the form $\rho_{1,2}=\int d\vec{\psi} P(\psi_1,\psi_2)\proj{\psi_1}\otimes \proj{\psi_2}$, with $P_1(\psi)=P_2(\psi)$. However, from the characterization of the extreme points of ${\cal T}_n$, we infer that it is enough to restrict to quantum states of the form $\frac{1}{m}\sum_{s=1}^m\proj{\psi_s}\otimes \proj{\psi_{s+1}}$, with $\psi_{m+1}=\psi_{1}$. Defining $\proj{\psi_s}=\frac{1}{2}(\id_2+\bar{v}_s\cdot\bar{\sigma})$, we have, by straightforward computation, that the above problem can be rephrased as

\begin{align}
&\max \frac{1}{m}\sum_{s=1}^m\bar{v}_s\cdot T\cdot \bar{v}_{s+1},\nonumber\\
&\mbox{s.t. }\|\bar{v}_s\|^2=1,\forall s.
\end{align}

\noindent Defining $\ket{\phi}=\frac{1}{\sqrt{m}}\sum_{s=1}^m\sum_{i=1}^3v_s^i\ket{s}\ket{i}$, it is immediate that $\braket{\phi}{\phi}=1$ and the objective function in the equation above equals to

\be
\frac{1}{2}\bra{\phi}(A\otimes T+A^\dagger\otimes T^\dagger)\ket{\phi},
\ee

\noindent where $A=\sum_{s=1}^m\ket{s}\bra{s+1}$ (we identify the states $\ket{m+1}$ and $\ket{1}$). Now, the eigenvectors of the operator between brackets can be written as $\ket{\tilde{k}}\ket{\phi_{k,j}}$, where $\ket{\tilde{k}}=\frac{1}{\sqrt{m}}\sum_{s=0}^{m-1}e^{\frac{i2\pi sk}{m}}\ket{s}$, and $\{\ket{\phi_{k,j}}\}_{j=1}^3$ are the eigenvectors of the operator $e^{\frac{i2\pi k}{m}}T+h.c.$. Substituting, and letting $m$ be arbitrarily large, we find that

\be
W_T\leq \frac{1}{2}\max_{\theta\in [0,2\pi]}\|e^{i\theta}T+e^{-i\theta}T^\dagger\|.
\ee

\section{Maximal violation of our entanglement witnesses by TI states}
\label{rho_XY}
The purpose of this appendix is to find the two-reduced density matrix $\rho_{1,2}$ of a TI state $\rho$ maximizing the energy of the Hamiltonian

\be
H_R=\frac{1}{n}\sum_{j=1}^n \sigma_y^{(j)}\sigma_x^{(j+1)},
\ee

\noindent assuming closed boundary conditions.

Let us assume $n=2m+1$. By using the Jordan-Wigner transformation~\cite{JordanWigner1928}, the spin operators can be written as

\begin{align}
&\sigma_z^{(k)}=[a_k,a^\dagger_k],\nonumber\\
&\sigma_x^{(k)}\sigma_x^{(k+1)}=(a_k^\dagger-a_k)(a_{k+1}^\dagger +a_{k+1}),\nonumber\\
&\sigma_y^{(k)}\sigma_y^{(k+1)}=-(a_k^\dagger+a_k)(a_{k+1}^\dagger -a_{k+1}),\nonumber\\
&\sigma_x^{(k)}\sigma_y^{(k+1)}=i(a_k^\dagger-a_k)(a_{k+1}^\dagger-a_{k+1}),\nonumber\\
&\sigma_y^{(k)}\sigma_x^{(k+1)}=i(a_k^\dagger+a_k)(a_{k+1}^\dagger+a_{k+1}),
\label{correspondence}
\end{align}

\noindent where the Hermitian variables $\{a_{k}: k=-m,...,m\}$ satisfy the fermionic canonical anticommutation relations:

\begin{align}
\{a_{k},a^\dagger_{k'}\}&=\delta_{kk'},\nonumber\\
\{a_{k},a_{k'}\}&=0.
\end{align}

Let us define the Fourier-transformed operators

\be
\tilde{a}_k=\frac{1}{\sqrt{n}}\sum_{j=-m}^{m}e^{\frac{i2\pi k j}{n}}a_j.
\ee

\noindent They obviously satisfy the canonical anticommutation relations, and the transformation can be inverted via the equation below:

\be
a_k=\frac{1}{\sqrt{n}}\sum_{j=-m}^{m}e^{\frac{-i2\pi k j}{n}}\tilde{a}_j.
\ee

In terms of the Fourier-transformed operators, $H_R$ admits a very simple expression (note that we are neglecting the $O(1/n)$ interaction between sites $-m$ and $m$):

\begin{align}
&H_R=i\frac{1}{n}\sum_{j=-m}^{m}(a_j+a_j^\dagger)(a_{j+1}+a_{j+1}^\dagger)=\nonumber\\
&i\frac{1}{n^2}\sum_{j=-m}^{m}\sum_{k,k'=-m}^{m}e^{\frac{-i2\pi k j}{n}}e^{\frac{i2\pi k' (j+1)}{n}}(\tilde{a}_k+\tilde{a}_{-k}^\dagger)(\tilde{a}_{-k'}+\tilde{a}^\dagger_{k'})=\nonumber\\
&i\frac{1}{n}\sum_{k=-m}^{m}e^{\frac{i2\pi k}{n}}(\tilde{a}_k+\tilde{a}_{-k}^\dagger)(\tilde{a}_{-k}+\tilde{a}_{k}^\dagger)=\nonumber\\
&i\frac{1}{n}(\tilde{a}_0+\tilde{a}_{0}^\dagger)(\tilde{a}_{0}+\tilde{a}_{0}^\dagger)+\nonumber\\
&+i\frac{1}{n}\sum_{k>0}^{m}e^{\frac{i2\pi k}{n}}(\tilde{a}_k+\tilde{a}_{-k}^\dagger)(\tilde{a}_{-k}+\tilde{a}_{k}^\dagger)+\nonumber\\
&e^{-\frac{i2\pi k}{n}}(\tilde{a}_{-k}+\tilde{a}_{k}^\dagger)(\tilde{a}_{k}+\tilde{a}_{-k}^\dagger)\approx\nonumber\\
&-\frac{1}{n}\sum_{k=1}^{m}2\sin\left(\frac{2\pi k}{n}\right)\{(\tilde{a}_k+\tilde{a}_{-k}^\dagger)(\tilde{a}_{-k}+\tilde{a}_{k}^\dagger)-1\}=\nonumber\\
&\frac{1}{n}\sum_{k=1}^{m}2\sin\left(\frac{2\pi k}{n}\right)(1-2b^\dagger_{k}b_{k})\equiv \tilde{H}_R.
\end{align}

\noindent In the last identity, we ignore the contribution of the $k=0$ mode. Here $b_k$ belong to another basis of fermionic operators, given by:

\begin{align}
b_k&=\frac{\tilde{a}^\dagger_k+\tilde{a}_{-k}}{\sqrt{2}},\, k=1,...,m,\nonumber\\
\bar{b}_k&=\frac{\tilde{a}_k-\tilde{a}^\dagger_{-k}}{\sqrt{2}},\, k=1,...,m,\nonumber\\
b_0&=a_0.
\end{align}

The maximum average value of the operator $\tilde{H}_R$ is hence 

\be
\frac{1}{2m+1}\sum_{k=1}^{m}2\sin\left(\frac{2\pi k}{2m+1}\right),
\ee
\noindent which, in the limit $n\to\infty$, converges to $2/\pi$. This value can be achieved by a state $\ket{\Phi_0}$ such that $b_k\ket{\Phi_0}=\bar{b}_k\ket{\Phi_0}=0$ for all $k$. 

Next, we will identify the nearest-neighbor density matrix of an infinite TI state $\rho$ achieving this value. First, define  

\begin{align}
H_{L}\equiv\frac{1}{n}\sum_{j=0}^{n}\sigma_{x}^{(j)}\sigma_{y}^{(j+1)}.
\end{align}

\noindent We observe that

\be
H_L\approx\tilde{H}_{L}=\frac{1}{n}\sum_{k=1}^{m}2\sin\left(\frac{2\pi k}{n}\right)(1-2\bar{b}^\dagger_{k}\bar{b}_{k}).
\ee

\noindent It follows that, in the limit $n\to\infty$, $\bra{\Phi_0}H_L\ket{\Phi_0}=2/\pi$, and so $\ket{\Phi_0},b^\dagger_0\ket{\Phi_0}$ are a basis for the states maximizing the energy of the Hamiltonian $\tilde{H}=\tilde{H}_R+\tilde{H}_L$. Our TI state $\rho$ will be the result of applying a symmetrization over the state $\tilde{\rho}=(\proj{\Phi_0}+b^\dagger_0\proj{\Phi_0}b_0)/2$ in the limit $n\to\infty$. It is easy to see that $\tr(\sigma_i\otimes\sigma_j\rho_{12})=\frac{1}{n}\sum_{k=1}^n \langle\sigma_i^{(k)}\sigma_j^{(k+1)}\rangle$, where the operator average is computed over $\tilde{\rho}$.

Now, $\tilde{H}$ is invariant under the action of the parity operator $\sigma_z^{\otimes n}$: the only non-zero components of the nearest-neighbors reduced density matrix of $\rho$ can thus be $\{\tr(\rho_{12}\sigma_{x,y}\otimes\sigma_{x,y})\}$, $\tr(\rho_{12}\sigma_z\otimes\id)=\tr(\rho_{12}\id\otimes\sigma_z)$ and $\tr(\rho_{12}\sigma_z\otimes\sigma_z)$.

In order to estimate $\tr(\rho_{12}\sigma_z\otimes\id)$, $\tr(\rho_{12}\sigma_z\otimes\sigma_z)$, we must compute the values $\frac{1}{n}\sum_{j}\langle a_j^\dagger a_{j}\rangle$, $\frac{1}{n}\sum_{j}\langle a_j^\dagger a_{j} a_{j+1}^\dagger a_{j+1}\rangle$. Let us carry out the calculation of the former one. We find that

\begin{align}
&\frac{1}{n}\sum_{j}\langle a_j^\dagger a_{j}\rangle=\frac{1}{n^2}\sum_{k,k'=-m}^m\sum_{j} e^{2\pi kj/n}e^{-2\pi k'j/n}\langle \tilde{a}_{k}^\dagger \tilde{a}_{k'}\rangle=\nonumber\\
&\frac{1}{n}\sum_{k=-m}^m\langle \tilde{a}_{k}^\dagger \tilde{a}_{k}\rangle\approx\nonumber\\
&\frac{1}{n}\sum_{k=1}^{m}\langle \tilde{a}_{k}^\dagger \tilde{a}_{k}\rangle+\langle \tilde{a}_{-k}^\dagger \tilde{a}_{-k}\rangle=\nonumber\\
&\frac{1}{n}\sum_{k=1}^{m}\left\langle\left(\frac{\bar{b}^\dagger_k+b_k}{\sqrt{2}}\right)\left(\frac{\bar{b}_k+b^\dagger_k}{\sqrt{2}}\right)\right\rangle+\nonumber\\
&+\left\langle\left(\frac{\bar{b}_k-b_k^\dagger}{\sqrt{2}}\right)\left(\frac{\bar{b}^\dagger_k-b_k}{\sqrt{2}}\right)\right\rangle=\frac{m}{n}.
\end{align}

\noindent For simplicity, in the third line we have neglected the $O(1/n)$ contribution of the $0$ mode. In the limit $n\to\infty$, we thus have that $\frac{1}{n}\sum_{j}\langle a_j^\dagger a_{j}\rangle=1/2$. 

The computation of the next average is more involved. We have that

\begin{align}
&\frac{1}{n}\sum_{j=-m}^m\langle a_j^\dagger a_{j} a_{j+1}^\dagger a_{j+1}\rangle=\nonumber\\
&\frac{1}{n^2}\sum_{k,k',k'',k'''}^m e^{\frac{i2\pi(k''-k''')}{n}}\delta(k-k'+k''-k''')\langle a_k^\dagger a_{k'} a_{k''}^\dagger a_{k'''}\rangle.
\end{align}

\noindent Given the structure of $\tilde{\rho}$, the averaged term on the second line can be non-zero only when $|k|,|k'|,|k''|,|k'''|$ are matched in pairs. This leaves us three options: (a) $|k|=|k'|$, $|k''|=|k'''|$; (b) $|k|=|k''|$, $|k'|=|k'''|$; (c) $|k|=|k'''|$, $|k'|=|k''|$. Taking into account the presence of the delta function in the integrand, the only contribution of case (a) that does not vanish in the limit $n\to\infty$ is $k=k'=v$, $k''=k'''=w$. Analogously, the only contributions which we need to take into account in the limit $n\to\infty$ for cases (b), (c) are, respectively, $k=-k''=v$, $k'=-k'''=w$ and $k=k'''=v$, $k'=k''=w$. We conclude that the expression above equals (modulo $O(1/n)$):

\begin{align}
&\frac{1}{n^2}\sum_{v,w}\langle\tilde{a}_v^\dagger\tilde{a}_v\tilde{a}_w^\dagger\tilde{a}_w\rangle-e^{\frac{i2\pi (w-v)}{n}}\langle\tilde{a}_v^\dagger\tilde{a}^\dagger_{-v}\tilde{a}_w\tilde{a}_{-w}\rangle+\nonumber\\
&e^{\frac{i2\pi (v-w)}{n}}\langle \tilde{a}_v^\dagger\tilde{a}_{v}\tilde{a}_{w}\tilde{a}^\dagger_w\rangle.
\end{align}

Now, on one hand it is easy to verify that $\langle \tilde{a}^\dagger_v\tilde{a}_v\rangle=\langle \tilde{a}_v\tilde{a}^\dagger_v\rangle=\frac{1}{2}$, $\langle \tilde{a}^\dagger_v\tilde{a}^\dagger_{-v}\rangle=-\langle \tilde{a}_v\tilde{a}_{-v}\rangle=\frac{1}{2}\mbox{sign}(v)$. On the other hand, for any two monomials of degree two $f,g$, $\langle f(\tilde{a}_v,\tilde{a}^\dagger_v)g(\tilde{a}_w,\tilde{a}^\dagger_w)\rangle=\langle f(\tilde{a}_v,\tilde{a}^\dagger_v)\rangle\langle g(\tilde{a}_w,\tilde{a}^\dagger_w)\rangle$ for $|v|=|w|$. Since values $v,w$ satisfying $|v|\not=|w|$ give a contribution that vanishes in the limit of large $n$, we conclude that

\begin{align}
&\lim_{n\to\infty}\frac{1}{n}\sum_{j=-m}^m\langle a_j^\dagger a_{j} a_{j+1}^\dagger a_{j+1}\rangle=\frac{1}{4}+\nonumber\\
&\frac{1}{(4\pi)^2}\int_{-\pi}^{\pi} dx \int_{-\pi}^{\pi} dy e^{i(x-y)}\mbox{sgn}(x)\mbox{sgn}(y)+e^{i(y-x)}=\nonumber\\
&\frac{1}{4}+\frac{1}{\pi^2}.
\end{align}

The above implies $\tr(\rho_{12}\sigma_z\otimes\id)=0$, $\tr(\rho_{12}\sigma_z\otimes\sigma_z)=4/\pi^2$. Similarly to the previous cases, by expressing $\sum_{k}\sigma_x^{(k)}\sigma_x^{(k+1)}$ and $\sum_{k}\sigma_y^{(k)}\sigma_y^{(k+1)}$ in terms of $b_k,\bar{b}_k$, we conclude that $\tr(\rho_{12}\sigma_x\otimes\sigma_x)=\tr(\rho_{12}\sigma_y\otimes\sigma_y)=0$.

Thus, 

\be
\rho_{1,2}=\frac{1}{4}\id_4+\frac{1}{2\pi}(\sigma_y\otimes\sigma_x+\sigma_x\otimes\sigma_y)+\frac{1}{\pi^2}\sigma_z^{\otimes 2}.
\ee

\section{Inequalities for nearest- and next-to-nearest neighbor correlators of infinite TI systems}
\label{app:ineqs}
Table.\ref{ineqs_list} lists the inequalities with nearest- and next-to-nearest neighbor correlators which can be violated using infinite TI quantum systems. Column $\mathcal{L}$ gives the local bounds for each inequality. $I_{\text{T}}$ in the main text is number 2 in the table while $I_{\text{G}}$ is number 4. Table.~\ref{ineqs_violations} gives the quantum value given by DMRG ($\mathcal{Q}$), the lower bound on the nonsignaling value ($\inf{\mathcal{NS}}$), the lower bound on the quantum value ($\inf{\mathcal{Q}}$) and whether the inequality can detect genuine TI nonlocality, as defined in the main text (\emph{Genuine}). These tools will be explained in more detail in a forthcoming paper~\cite{InPrep}.

The quantum values are obtained by using the software \emph{Open Source MPS}~\cite{OSMPS}. An inequality with coefficients $\{C_0,C_1,C_{00}^{AB},C_{01}^{AB},C_{10}^{AB},C_{11}^{AB},C_{00}^{AC},C_{01}^{AC},C_{10}^{AC},C_{11}^{AC}\}$ is turned into a 3-site Hamiltonian
\begin{align}
H\equiv &\sum_{i=1}^{\infty}(C_0\cdot A_0^i+C_1\cdot A_1^i+C_{00}^{AB}\cdot A_0^i\otimes A_0^{i+1}+C_{01}^{AB}\cdot A_0^i\otimes A_1^{i+1}\nonumber\\
&+C_{10}^{AB}\cdot A_1^i\otimes A_0^{i+1}+C_{11}^{AB}\cdot A_1^i\otimes A_1^{i+1}+C_{00}^{AC}\cdot A_0^i\otimes A_0^{i+2}\nonumber\\
&+C_{01}^{AC}1\cdot A_0^i\otimes A_1^{i+2}+C_{10}^{AC}\cdot A_1^i\otimes A_0^{i+2}+C_{11}^{AC}\cdot A_1^i\otimes A_1^{i+2}),
\end{align}
with
\begin{align}
A_0&=\begin{pmatrix}
1&0&0&0\\
0&-1&0&0\\
0&0&1&0\\
0&0&0&-1
\end{pmatrix},\\
A_1&=\begin{pmatrix}
 \cos (\theta ) & \sin (\theta ) & 0 & 0 \\
 \sin (\theta ) & -\cos (\theta ) & 0 & 0 \\
 0 & 0 & \cos (\phi ) & \sin (\phi ) \\
 0 & 0 & \sin (\phi ) & -\cos (\phi )
\end{pmatrix}.
\end{align}
The quantum value then corresponds to the ground state energy per site of this Hamiltonian. The values for $\theta$ and $\phi$ for each inequality are given in the table.

\begin{table}[htbp!]
\begin{center}
\begin{tabular}{|c|c||c|c|c|c|c|c|c|c|c|c|}
\hline
No.&$\mathcal{L}$& $C_0$& $C_1$& $C_{00}^{AB}$& $C_{01}^{AB}$&$C_{10}^{AB}$&$C_{11}^{AB}$& $C_{00}^{AC}$& $C_{01}^{AC}$&$C_{10}^{AC}$&$C_{11}^{AC}$\\
\hline
 1 & -3 & -2 & -2 & 2 & 2 & -1 & 1 & 0 & 1 & 0 & 0 \\
\hline
 2 & -4 & -2 & -4 & -2 & 2 & 2 & 2 & 1 & 0 & 0 & 1 \\
\hline
 3 & -5 & -3 & -3 & 2 & 2 & 2 & -3 & 1 & 0 & -1 & 2 \\
\hline
 4 & -6 & -4 & -6 & -3 & 2 & 3 & 2 & 2 & 0 & 1 & 1 \\
\hline
 5 & -11 & -4 & -12 & -4 & 6 & 6 & 6 & 1 & -1 & -1 & 4 \\
\hline
 6 & -7 & -5 & -5 & 2 & 3 & 2 & -4 & 1 & 1 & -1 & 3 \\
\hline
 7 & -8 & -6 & -8 & -4 & 3 & 3 & 2 & 3 & 1 & 1 & 1 \\
\hline
 8 & -5 & -2 & 2 & 2 & -2 & -2 & -4 & 1 & 1 & 1 & 2 \\
\hline
 9 & -3 & -3 & 1 & 1 & 1 & 1 & -1 & 1 & 0 & -1 & 1 \\
\hline
 10 & -6 & -4 & 2 & 2 & 2 & 2 & -4 & 1 & -1 & -1 & 3 \\
\hline
 11 & -6 & -6 & 0 & 2 & 3 & 3 & -2 & 3 & -1 & -1 & 1 \\
\hline
\end{tabular}
\end{center}
\caption{Inequalities for nearest- and next-to-nearest neighbor correlators}
\label{ineqs_list}
\end{table}

\begin{table}[htbp!]
\begin{center}
\begin{tabular}{|c|c|c|c|c|c|c|c|}
\hline
No.&$\mathcal{L}$&$\mathcal{Q}$&$\theta$&$\phi$&$\inf{\mathcal{Q}}$&$\inf{\mathcal{NS}}$&Genuine\\
\hline
 1 & -3 & -3.111 & 6.236 & 1.501 & -3.1907 & -3.5 & N \\
\hline
 2 & -4 & -4.184 & 0.077 & 1.874 & -4.38643 & -4.8 & N \\
\hline
 3 & -5 & -5.098 & 2.17 & 6.275 & -5.3502 & -5.8 & N \\
\hline
 4 & -6 & -6.179 & 6.236 & 4.175 & -6.35706 & -6.8 & Y \\
\hline
 5 & -11 & -11.104 & 5.996 & 4.691 & -11.7124 & -12.87 & N \\
\hline
 6 & -7 & -7.073 & 4.093 & 0.29 & -7.31685 & -7.8 & Y \\
\hline
 7 & -8 & -8.191 & 4.359 & 6.197 & -8.52433 & -9.06 & Y \\
\hline
 8 & -5 & -5.039 & 3.169 & 5.226 & -5.32177 & -5.8 & N \\
\hline
 9 & -3 & -3.04 & 3.843 & 1.193 & -3.22662 & -3.5 & Y \\
\hline
 10 & -6 & -6.109 & 0.817 & 2.421 & -6.37417 & -7 & Y \\
\hline
 11 & -6 & -6.081 & 3.787 & 6.067 & -6.36487 & -7 & Y \\
\hline
\end{tabular}
\end{center}
\caption{Violations of the inequalities under different scenarios.}
\label{ineqs_violations}
\end{table}

\section{Variational quantum optimizations of TI Bell inequalities}
\label{app:variational}

Given a Bell functional of the form

\be
B(P)\equiv\sum_{x,y,a,b}B_{x,y,a,b}P(a,b|x,y),
\ee

\noindent we wish to minimize its value over all distributions $P(a,b|x,y)$ of the form:

\be
P(a,b|x,y)=\tr(\rho_{AB} M_{x,a}\otimes M_{y,b}),
\label{primo}
\ee

\noindent where $\rho_{AB}$ is the 2-reduced density matrix of a TI state and $\{M_{x,a}\}_a$ are POVMs, i.e., $M_{x,a}\geq 0,\sum_{a}M_{x,a}=\id$.

Ideally, we would like to devise a sort of see-saw algorithm, that, starting from a random configuration of measurements $\{M_{x,a}\}_a$, would optimize over the state $\rho_{AB}$. Then, fixing the optimal $\rho_{AB}$, we optimize over the measurements $\{M_{x,a}\}_a$, and so on until the objective function converges.

For fixed measurements $\{M_{x,a}\}_a$, one just needs to optimize the Hamiltonian

\be
\sum_{x,y,a,b}B_{x,y,a,b} M_{x,a}\otimes M_{y,b}
\ee

\noindent over TI states $\rho_{AB}$; this can be done with Time Evolving Block-Decimation (TEBD)~\cite{PhysRevLett.98.070201}. For fixed $\rho_{AB}$, optimizing over the measurements is complicated, since the objective function is bilinear in them, i.e., it contains terms of the form $M_{x,a}\otimes M_{y,b}$.

Let us then try a different method: suppose that the TI state of the chain is of the form $\rho\otimes \omega$, where $\omega$, the register, is a classical state of the form $\omega=\sum_{s_1,s_2,\ldots=0}^{r-1}P(s_1,s_2,\ldots)\proj{s_1,s_2,\ldots}$, where $P(s_1,s_2,\ldots)$ is a TI probability distribution and $\rho$ is a TI state with site dimension $d$. It follows that the overall state $\rho\otimes \omega$ has site dimension $r\times d$.

The protocol that site $k$ will use to produce an outcome is as follows: first, he measures his register, obtaining a result $s\in\{0,\ldots,r-1\}$. Then, depending on his input $x_k$, he will conduct the measurement $\{M^{(s)}_{x,a}\}_a$ over the rest of his state, namely, $\rho_k$.

The statistics observed by two nearest neighbors are thus given by

\be
P(a,b|x,y)=\tr\{\rho_{AB}\left(\sum_{s_A,s_B=0}^{r-1}P(s_A,s_B)M^{(s_A)}_{x,a}\otimes M^{(s_B)}_{y,b}\right)\}.
\ee

Now, by (\ref{extreme}), the extreme points of bipartite distributions with a TI extension are of the form $(s,s)$ (deterministic) or a convex combination with equal weights of deterministic points $(s^{(i)},s^{(i+1)})$, $i=1,...,m$, with $1<m\leq r$, $s^{(m+1)}=s^{(1)}$ and such that $s^{(i)}\not=s^{(j)}$ for $i\not=j$, $i,j=1,...,m$. In the latter case, after the relabeling $E^{(s_i)}_{x,a}\to E^{(i)}_{x,a}$, we find that $P(a,b|x,y)$ can be rewritten as:

\be
P(a,b|x,y)=\tr\left\{\rho_{AB}\left(\frac{1}{m}\sum_{i=0}^{m-1}M^{(i)}_{x,a}\otimes M^{(i+1)}_{y,b}\right)\right\}.
\ee

\noindent This form also covers the first case, by choosing $M^{(i)}_{x,a}$ independent of $i$. Hence, in order to optimize $B(P)$, it suffices to consider expressions of the form

\be
B(P)=\sum_{x,y,a,b}B(x,y,a,b)\tr\left\{\rho_{AB}\left(\frac{1}{m}\sum_{i=0}^{m-1}M^{(i)}_{x,a}\otimes M^{(i+1)}_{y,b}\right)\right\},
\label{obj}
\ee

\noindent for $m=2,3,...,r$.

Note that now the objective function is linear on each of the measurement sets $\vec{M}^{(i)}\equiv\{M^{(i)}_{x,a}:x,a\}$, for $i=0,\ldots,m-1$. It follows that, for fixed $\rho_{AB}$ and fixed measurement sets $\{\vec{M}^{(j)}:j\not=i\}$, optimizing over $\vec{M}^{(i)}$ amounts to solving a semidefinite programming (SDP) problem~\cite{sdp}, and thus it can be carried out efficiently.

Hence, the full see-saw method proposed here to minimize $B(P)$ works at follows:

\begin{enumerate}
\item
Choose $m>1$.
\item
Generate random (extremal) measurement settings $\vec{E}^{(j)}$, perhaps by solving random SDP problems.
\item
Optimize $\rho_{AB}$ in (\ref{obj}) using TEBD.
\item
Fixing $\rho_{AB}$ and all sets $\vec{M}^{(j)}$ but $\vec{M}^{(i)}$, use SDP to optimize over (\ref{obj}) $\vec{M}^{(i)}$. repeat for $i=0,\ldots,m-1$.
\item
Go to step 3, iterating until the objective function seems to converge.

Technically, for a fixed size $r$ of the classical register, we should repeat this procedure for $m=2,...,r$. Since we are just interested in finding the ultimate quantum violations of the inequality, it suffices to choose $m$ large and possibly non-prime.

The scheme above was conceived to optimize over TI Bell inequalities involving only nearest-neighbor correlations. The extension to more parties is straightforward. For three parties, for instance, we just need to replace all instances of the expression $M_{x,a}\otimes M_{y,b}\otimes M_{z,c}$ in the Bell operator by

\be
\frac{1}{m}\sum_{i=0}^{m-1}M^{(i)}_{x,a}\otimes M^{(i+1)}_{y,b}\otimes M^{(i+2)}_{z,c},
\ee

\noindent where the superscripts should be taken modulo $m$, and $m\geq 3$.

\end{enumerate}

\end{appendix}

\end{document}